\documentclass{article}

\PassOptionsToPackage{numbers,compress}{natbib}
\usepackage[main,preprint]{neurips_2026}

\usepackage[utf8]{inputenc}
\usepackage[T1]{fontenc}
\usepackage{amsmath,amssymb,amsthm,mathtools,bm}
\usepackage{booktabs,multirow,array,tabularx}
\usepackage{enumitem}
\usepackage{graphicx}
\usepackage{float}
\usepackage{microtype}
\usepackage{xcolor}
\usepackage{xspace}
\usepackage{url}
\definecolor{linkblue}{rgb}{0.12,0.30,0.65}
\usepackage[pagebackref,breaklinks,colorlinks,citecolor=linkblue,urlcolor=linkblue,linkcolor=linkblue]{hyperref}
\hypersetup{pdftitle={Can Pixels Alone Reveal Image Origin? Minimax Limits and Learnable Interfaces for Passive Provenance},pdfauthor={Kai Yao}}
\usepackage[capitalise,noabbrev]{cleveref}

\newcommand{\Description}[1]{}

\newtheorem{theorem}{Theorem}
\newtheorem{lemma}[theorem]{Lemma}
\newtheorem{corollary}[theorem]{Corollary}
\newtheorem{proposition}[theorem]{Proposition}

\crefname{theorem}{Theorem}{Theorems}
\Crefname{theorem}{Theorem}{Theorems}
\crefname{lemma}{Lemma}{Lemmas}
\Crefname{lemma}{Lemma}{Lemmas}
\crefname{corollary}{Corollary}{Corollaries}
\Crefname{corollary}{Corollary}{Corollaries}
\crefname{proposition}{Proposition}{Propositions}
\Crefname{proposition}{Proposition}{Propositions}
\crefname{definition}{Definition}{Definitions}
\Crefname{definition}{Definition}{Definitions}
\crefname{remark}{Remark}{Remarks}
\Crefname{remark}{Remark}{Remarks}
\crefname{example}{Example}{Examples}
\Crefname{example}{Example}{Examples}

\newcommand{\X}{\mathcal{X}}
\newcommand{\Y}{\mathcal{Y}}
\newcommand{\Qset}{\mathcal{Q}}
\newcommand{\Kset}{\mathcal{K}}
\newcommand{\TV}{\mathrm{TV}}
\newcommand{\Acc}{\mathrm{Acc}}
\newcommand{\Gap}{\mathrm{Gap}}
\newcommand{\E}{\mathbb{E}}
\newcommand{\Prb}{\mathbb{P}}
\newcommand{\one}{\mathbf{1}}
\newcommand{\supp}{\mathrm{supp}}
\newcommand{\R}{\mathbb{R}}
\newcommand{\U}{\mathcal{U}}
\newcommand{\AttackRad}{\Delta}
\newcommand{\Human}{\mathsf{H}}
\newcommand{\AI}{\mathsf{A}}

\DeclareMathOperator*{\argmax}{arg\,max}
\DeclareMathOperator*{\argmin}{arg\,min}

\title{Can Pixels Alone Reveal Image Origin?\\Minimax Limits and Learnable Interfaces for Passive Provenance}

\author{Kai Yao\\
School of Informatics\\
University of Edinburgh\\
\texttt{kai.yao@ed.ac.uk}}

\begin{document}

\maketitle

\begin{center}
\small
This work has been accepted for publication in the proceedings of the 40th Annual Conference on Neural Information Processing Systems (NeurIPS 2026).
\end{center}

\begin{abstract}
Passive image provenance asks whether pixels alone can reveal where an image came from: a human, an aggregate AI class, or a particular generator. This becomes a robustness problem once a source image can be edited before the verifier sees it. We study the problem as source--target verification under adversarial distribution shift. Our first result gives the exact best-case limit for any image-only verifier: the largest robust target-acceptance gap equals the minimum total-variation distance between the target distribution and the set of attacked source distributions. This quantity depends on the source, target, and edit class, not on the verifier architecture. Our second result explains why deployed public verifiers can fail before this statistical limit is reached. If the verifier can be emulated on the attack region to error $\varepsilon$, then a surrogate black-box attack reaches target acceptance within $2\varepsilon$ plus optimization error of the white-box optimum; score-revealing logistic and softmax heads over public features are identifiable, and approximate score access gives stable recovery bounds. A finite-state experiment checks the minimax identity where both sides are computable. On same-prompt real/diffusion benchmarks, the evaluated public CLIP verifiers fail under targeted pixel attacks, while a ResNet-18 victim exhibits partial fake-to-real transfer. Binary feedback with abstention reduces measured attack success, but positive empirical gap upper bounds do not establish robustness. These results motivate separate evaluation of the source--target statistical ceiling and the information released by a deployed verifier.
\end{abstract}

\section{Introduction}

A passive image-provenance system receives an image and must infer its origin from pixels alone, continuing a forensic line that uses sensor and camera-model traces~\cite{lukas2006digital,cozzolino2019noiseprint} and learned forensic embeddings~\cite{cozzolino2018forensictransfer}. It may ask whether the image is real or AI-generated~\cite{wang2020cnnspot,ojha2023universal,wang2023dire,cozzolino2024clip}, or it may ask which generator produced it~\cite{yu2019attributing,yang2023pose,song2024manifpt}. This looks like a classification problem on clean data. Under adaptive editing, detector-evasion and adversarial-example results show that image edits can alter a classifier's decision~\cite{goodfellow2015explaining,papernot2017practical,carlini2017towards,madry2017towards,athalye2018obfuscated,carlini2020deepfake,hussain2021adversarial,wesselkamp2022misleading,yao2025smudged}. We focus on targeted forgery: an edited source image is intended to be accepted as a chosen target.

We use the following source--target view throughout the paper. A source image is drawn from a source distribution $P_s$. The attacker may edit it within an admissible set $\Gamma(x)$, where $x$ is the original image. The verifier outputs a target score $f_t(x)\in[0,1]$, interpreted as acceptance of target label $t$. The central question is: how much larger can the target score be on true target images than on any edited source images?

There are two distinct reasons this separation can fail. The first reason is statistical. If the edited source distribution can get close to the target distribution, then no image-only verifier can reliably separate them. The second reason is interface-based. Even when image-level signal remains, a particular public verifier may reveal enough information for an attacker to learn a surrogate and optimize against it. These two failures look similar in experiments, but they require different fixes: better training cannot beat the statistical ceiling, while interface design can reduce learnability of the deployed verifier.

\paragraph{Main take-away.}
The paper separates these two mechanisms. Barrier~I is an exact minimax ceiling: the best possible passive verifier is limited by the total-variation distance from the target distribution to the attacked source family. Barrier~II is a deployed-verifier failure mode: a learnable public interface can make black-box attacks nearly white-box. In the evaluated CLIP interface comparison, binary feedback with abstention reduces attack success; its positive empirical gap upper bounds leave the true robust gap unresolved.

\paragraph{Contributions.}
\begin{enumerate}[leftmargin=1.4em]
    \item \textbf{Exact minimax limit for target verification.}
    We define the attackability radius
    \[
        \AttackRad_{s,t}:=\inf_{Q\in\Qset_s}\TV(P_t,Q),
    \]
    where $\Qset_s$ is the set of distributions reachable by editing source samples and $\TV$ is total variation. We prove that
    \[
        \sup_{f_t:\X\to[0,1]}
        \left(\E_{P_t}f_t-\sup_{Q\in\Qset_s}\E_Q f_t\right)
        =
        \AttackRad_{s,t}.
    \]
    Thus the optimal passive verifier is governed by a distributional projection of the target onto the attacked source family.

    \item \textbf{Verifier-learnability barrier for public interfaces.}
    We prove that if a deployed verifier is $\varepsilon$-emulable on the attack region and the attacker optimizes the emulator with expected error $\bar\alpha$, then black-box targeted attacks achieve target acceptance within $2\varepsilon+\bar\alpha$ of the white-box optimum.

    \item \textbf{Exact and stable recovery of score-revealing heads.}
    We specialize established score-based recovery ideas to logistic and softmax heads over public features. Full score access reduces the attack to system identification; bounded log-odds error gives a stable emulator when the probe features are well conditioned.

    \item \textbf{Query-cost separation for private low-bandwidth interfaces.}
    We prove a conditional query bound in a stylized hidden-state model, showing how private low-bandwidth feedback can raise attack cost. This does not change the image-only statistical ceiling.

    \item \textbf{Empirical evidence on same-prompt real/diffusion benchmarks.}
    We evaluate public, limited-public, and private interfaces. The evaluated public CLIP verifiers retain negative empirical gap upper bounds after stronger clean or adversarial retraining. We add a finite-state check of the minimax identity, partial fake-to-real transfer to a ResNet-18 victim, and paired image-quality measurements.
\end{enumerate}

Source code: \url{https://github.com/kaikaiyao/pixels-alone-provenance}.

\section{Related Work}

\paragraph{Adversarial robustness.}
Adversarial robustness studies classifiers under worst-case perturbations, robust optimization, transfer, black-box attacks, and gradient-based attacks~\cite{goodfellow2015explaining,papernot2017practical,carlini2017towards,madry2017towards,athalye2018obfuscated}. We use the standard variational characterization of total variation and Sion's minimax theorem to characterize the largest robust target-verification gap achievable by any image-only verifier against a family of edited source distributions. Model extraction~\cite{tramer2016stealing}, knowledge distillation~\cite{hinton2015distilling}, and surrogate attacks~\cite{papernot2017practical} are established tools; our emulator theorem quantifies the effect of score emulation on deployed target acceptance within this source--target analysis.

\paragraph{Passive provenance, detection, and attribution.}
Passive provenance follows the forensic idea that origin traces may remain in the released artifact. Prior work used sensor and camera-model traces~\cite{lukas2006digital,cozzolino2019noiseprint} and learned forensic embeddings for manipulation detection~\cite{cozzolino2018forensictransfer}; other work studies face-manipulation detection~\cite{rossler2019faceforensics,dang2020digital,li2020facexray,shiohara2022sbi}, synthetic-image detection and forensic artifacts~\cite{wang2020cnnspot,ojha2023universal,wang2023dire,cozzolino2024clip,dzanic2020fourier,chandrasegaran2021closer,park2025community,karageorgiou2025spectral,yan2023deepfakebench,yan2024df40,mahara2025methods}, and generator attribution/open-set attribution~\cite{yu2019attributing,yang2023pose,song2024manifpt,girish2021openworld,guarnera2022exploitation,yang2022dna,sun2023cpl,song2025riemannian,nguyen2025forensic}. Prior work shows useful clean accuracy and many empirical failures under attack~\cite{carlini2020deepfake,hussain2021adversarial,wesselkamp2022misleading,yao2025smudged,diao2024vulnerabilities}. Our contribution is an exact source--target minimax characterization and a separate emulator-to-attack bound for a deployed verifier. Neves et al.~\cite{neves2020ganprintr} remove GAN fingerprints to evade fixed face-manipulation detectors; Mavali et al.~\cite{mavali2024fake} study transfer- and query-based attacks on selected AI-image detectors; Wang et al.~\cite{wang2023origin} attribute images through accessible candidate generators and evaluate filter and blur edits. We build on this fixed-system literature to distinguish a source--target limit over all image-only verifiers from the learnability of one deployed interface; the surrogate attack itself is not claimed as new.

\paragraph{Watermarking and active provenance.}
Watermarking modifies the generator, training process, or generated artifact to support later detection or attribution~\cite{yu2020responsible,yu2021artificial,fernandez2023stable,wen2023treering,kim2024wouaf,an2024waves,xu2025invismark}. This is complementary to our setting. We study passive verification, where no trusted generator-side mark or metadata is assumed. For language models, Zhang et al.~\cite{zhang2023watermarks} prove strong-watermark impossibility under quality-oracle and efficiently mixing perturbation-oracle assumptions; for images, Fairoze et al.~\cite{fairoze2025difficulty} study the difficulty of jointly achieving robustness, unforgeability, and public detectability. Our results isolate another mechanism: even if image-level signal exists, the released verifier interface may itself become the attack surface.

\paragraph{Private auditors.}
Keyed passive attribution methods such as SPRINT keep the verifier's reconstruction task private to limit the information available to an adversary~\cite{yao2025authprint}. Our private-interface theorem is not a literal model of any one auditor. It isolates a query-complexity mechanism: hidden verifier state and low-bandwidth feedback can force search instead of direct score optimization.

\section{Target Verification under Adversarial Distribution Shift}
\label{sec:model}

We now define the notation used in the rest of the paper. The image space $\X$ is finite, which matches quantized digital images and keeps the main proof elementary. Appendix~\ref{app:continuous-route} gives an extension to general measurable image spaces under the stated total-variation compactness assumption. Labels are denoted by $y\in\Y$; a label may be \textsc{Human}, an aggregate \textsc{AI} class, or a particular generator. The distribution of images with label $y$ is $P_y$.

For a source--target task, $s$ is the source label and $t$ is the target label. A target verifier is a score function
\[
    f_t:\X\to[0,1],
\]
where $f_t(x)$ is the probability or confidence with which image $x$ is accepted as target $t$. If $d:\X\to[0,1]$ is a binary AI detector, then $f_\AI=d$ and $f_\Human=1-d$. If $g:\X\to\mathcal{P}(\{1,\dots,m\})$ is a multiclass attribution model, then $f_t(x)=g_t(x)$.

The edit correspondence $\Gamma:\X\to2^\X\setminus\{\emptyset\}$ specifies which edited outputs are allowed from each source image. A stochastic editor is a Markov kernel $A(\cdot\mid x)$ supported on $\Gamma(x)$. Applying this editor to $X\sim P_s$ gives an attacked source distribution $A_\#P_s$. We collect all such distributions as
\[
    \Qset_s
    :=
    \{A_\#P_s:\supp(A(\cdot\mid x))\subseteq\Gamma(x)\ \forall x\in\X\}.
\]
The target label changes the objective, but not the reachable family $\Qset_s$. The set $\Qset_s$ is nonempty, convex, and compact; the short proof is in Appendix~\ref{app:proofs-barrier1}.

For a fixed pair $(s,t)$, define clean target acceptance and worst-case targeted acceptance by
\[
    \Acc_t(f_t):=\E_{P_t}[f_t],
    \qquad
    W_{s\to t}(f_t;\Gamma):=\sup_{Q\in\Qset_s}\E_Q[f_t].
\]
The robust target-verification gap is
\[
    \Gap_{s,t}(f_t)
    :=
    \Acc_t(f_t)-W_{s\to t}(f_t;\Gamma).
\]
A positive gap means that true target images receive higher average target score than any edited source distribution. A negative gap means an edited source can receive even higher target score than clean target samples.

For an aggregate AI label, we write $P_\AI=\sum_j\pi_jP_j$, where $P_j$ is the distribution of generator $j$ and the weights $\pi_j$ describe the deployment mixture. This mixture notation is used only when discussing coarse AI-vs-human verification.

\section{Barrier I: A Minimax Statistical Ceiling}
\label{sec:barrier1}

Barrier~I asks the most favorable question for the defender: if we could choose any image-only verifier, what robust gap could we guarantee? Define the attackability radius
\[
    \AttackRad_{s,t}
    :=
    \inf_{Q\in\Qset_s}\TV(P_t,Q),
    \qquad
    \TV(P,Q):=\frac12\sum_{x\in\X}|P(x)-Q(x)|.
\]
This is the distance from the target distribution to the closest attacked source distribution.

\begin{theorem}[Attackability-radius law]
\label{thm:attackability-law}
For every source--target pair $(s,t)$,
\[
    \sup_{f_t:\X\to[0,1]}
    \Gap_{s,t}(f_t)
    =
    \AttackRad_{s,t}.
\]
Equivalently,
\[
    \sup_{f_t:\X\to[0,1]}
    \inf_{Q\in\Qset_s}
    \left(\E_{P_t}f_t-\E_Q f_t\right)
    =
    \inf_{Q\in\Qset_s}\TV(P_t,Q).
\]
\end{theorem}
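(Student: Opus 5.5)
The plan is to treat the statement as a minimax exchange for the bilinear payoff
\[
    L(f,Q):=\E_{P_t}f-\E_Q f,
\]
where $f$ ranges over $\mathcal F:=[0,1]^\X$ and $Q$ over $\Qset_s$. First, the two displayed forms are identical: since $\E_{P_t}f$ does not depend on $Q$, we have $\inf_{Q\in\Qset_s}L(f,Q)=\Acc_t(f)-W_{s\to t}(f;\Gamma)=\Gap_{s,t}(f)$. It then suffices to prove $\sup_f\inf_Q L=\inf_Q\sup_f L$ and to identify the inner supremum on the right with total variation.

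For the identification, the variational characterization $\sup_{f\in\mathcal F}(\E_Pf-\E_Qf)=\TV(P,Q)$ is elementary on a finite $\X$. Write $\E_Pf-\E_Qf=\sum_x f(x)(P(x)-Q(x))$. This is maximized by $f=\one\{P>Q\}$, which gives $\sum_x(P(x)-Q(x))_+$. Because both measures have total mass one, this equals $\tfrac12\sum_x|P(x)-Q(x)|$. Hence $\inf_Q\sup_f L=\inf_Q\TV(P_t,Q)=\AttackRad_{s,t}$.

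The weak inequality $\sup_f\inf_Q L\le\inf_Q\sup_f L$ holds in general, so the defender can never exceed $\AttackRad_{s,t}$. For the reverse inequality I would invoke Sion's minimax theorem. The set $\mathcal F=[0,1]^\X$ is a compact convex subset of $\R^\X$. By the fact stated earlier (Appendix~\ref{app:proofs-barrier1}), $\Qset_s$ is nonempty, convex and compact in the simplex over $\X$. The payoff $L$ is affine, hence continuous, in each argument separately, and it is jointly continuous on these finite-dimensional sets. It is therefore concave in $f$ and convex in $Q$, and Sion gives $\max_f\min_Q L=\min_Q\max_f L$. The extrema are attained by compactness and continuity, so the suprema and infima in the statement coincide with these maxima and minima.

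The only real obstacle is confirming the hypotheses on $\Qset_s$, namely convexity and compactness. If I had to reprove them rather than cite the appendix, I would proceed as follows. A mixture of two push-forwards $\lambda A_\#P_s+(1-\lambda)B_\#P_s$ equals $(\lambda A+(1-\lambda)B)_\#P_s$, and the mixed kernel remains supported in $\Gamma(x)$. This gives convexity. For compactness, the set of admissible kernels is a product over $x$ of simplices on the finite sets $\Gamma(x)$, which is compact. The map $A\mapsto A_\#P_s$ is linear, so its image is compact.

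Even without Sion, a direct finite-dimensional alternative is available. One could use LP duality, or the separating hyperplane theorem applied to the compact convex set $\Qset_s$ and the point $P_t$. Either route delivers the same reverse inequality.
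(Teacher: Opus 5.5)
Your proposal is correct and matches the paper's proof essentially step for step. Like the paper, you rewrite $\Gap_{s,t}(f_t)$ as $\inf_{Q\in\Qset_s}L(f_t,Q)$ and apply Sion's theorem on the compact convex sets $[0,1]^\X$ and $\Qset_s$, with $\Qset_s$ handled as in \cref{lem:qset-convex} via a product of simplices under a linear push-forward; you then identify the inner supremum with $\TV(P_t,Q)$ through the variational characterization.
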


\paragraph{Why this theorem matters.}
The theorem says that the best possible passive verifier is determined by the geometry of distributions, not by a model class or a training algorithm. If edited source distributions can approach the target distribution, no passive image-only method can keep a large robust gap.

\paragraph{Proof sketch.}
For a fixed attacked distribution $Q$, the best $[0,1]$-valued test separates $P_t$ from $Q$ with value exactly $\TV(P_t,Q)$. This is the standard variational form of total variation. The nontrivial step is that the attacker is not fixed: it may choose any $Q\in\Qset_s$. Since $\Qset_s$ and $[0,1]^\X$ are compact convex sets and the payoff is affine in both arguments, Sion's minimax theorem~\cite{sion1958minimax} lets us exchange the verifier and attacker optimizations. The value becomes the closest total-variation distance from $P_t$ to $\Qset_s$. The full proof, including the compactness and total-variation lemmas, is in Appendix~\ref{app:proofs-barrier1}.

\begin{corollary}[Universal passive ceiling]
\label{cor:universal-ceiling}
For every passive image-only verifier $f_t$,
\[
    \Gap_{s,t}(f_t)\le\AttackRad_{s,t}.
\]
Thus if $\AttackRad_{s,t}=0$, no passive verifier can maintain positive robust target separation under the edit class $\Gamma$.
\end{corollary}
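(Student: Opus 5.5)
The corollary follows from \cref{thm:attackability-law} by a one-line argument. For any fixed $f_t:\X\to[0,1]$, the quantity $\Gap_{s,t}(f_t)$ is one of the values over which the supremum on the left of the theorem is taken. Hence
\[
    \Gap_{s,t}(f_t)\le\sup_{g:\X\to[0,1]}\Gap_{s,t}(g)=\AttackRad_{s,t}.
\]
Nothing beyond the theorem is needed. The only point to check is that ``passive image-only verifier'' means exactly a measurable map $f_t:\X\to[0,1]$, which is how verifiers were defined in \cref{sec:model}.

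It is also worth recording a direct proof of the inequality that bypasses the minimax exchange, since this direction is the easy one. Fix $f_t$ and any $Q\in\Qset_s$. By the definition of $W_{s\to t}$,
\[
    \Gap_{s,t}(f_t)\le\E_{P_t}f_t-\E_Qf_t.
\]
The variational bound for total variation gives, for every $[0,1]$-valued $f$,
\[
    \E_Pf-\E_Qf=\sum_x f(x)\bigl(P(x)-Q(x)\bigr)\le\sum_{x:P(x)>Q(x)}\bigl(P(x)-Q(x)\bigr)=\TV(P,Q).
\]
Therefore $\Gap_{s,t}(f_t)\le\TV(P_t,Q)$. Taking the infimum over $Q\in\Qset_s$ yields $\Gap_{s,t}(f_t)\le\AttackRad_{s,t}$. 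This is the weak-duality half of the theorem and needs no compactness or convexity.

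For the second sentence, suppose $\AttackRad_{s,t}=0$. Then every $f_t$ satisfies $\Gap_{s,t}(f_t)\le0$, so no verifier attains a positive robust gap. I expect no real obstacle here. The one subtlety is that the infimum defining $\AttackRad_{s,t}$ might not be attained, so zero radius would only give approximating distributions. This does not matter, because the bound holds pointwise for each $Q$ and survives the infimum. Compactness of $\Qset_s$ from \cref{app:proofs-barrier1} would in any case make the infimum attained, but it is not needed.
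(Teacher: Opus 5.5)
Your proof is correct, and your first argument is exactly how the paper obtains the corollary: it is stated without a separate proof as an immediate consequence of \cref{thm:attackability-law}, and the same step reappears in Appendix~\ref{app:sandwich} as $\sup_{f\in\mathcal{F}}\Gap_{s,t}(f)\le\AttackRad_{s,t}$. Your direct argument is the weak-duality half of the paper's proof of \cref{thm:attackability-law}, using the same bound as \cref{lem:tv-variational}. You are right that this half needs neither Sion's theorem nor convexity or compactness of $\Qset_s$, so the ceiling inequality holds for any attacked family, including on general measurable spaces without the total-variation compactness assumption of Appendix~\ref{app:continuous-route}.
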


\begin{proposition}[Coarsening can lower the target ceiling]
\label[proposition]{prop:coarsening}
If an aggregate target is $P_\AI=\sum_j\pi_jP_j$, then for any source label $s$,
\[
    \AttackRad_{s,\AI}\le\sum_j\pi_j\AttackRad_{s,j}.
\]
\end{proposition}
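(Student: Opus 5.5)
The plan is to use two facts: the reachable family $\Qset_s$ is convex, which is stated in \cref{sec:model} and proved in Appendix~\ref{app:proofs-barrier1}, and total variation is jointly convex. For each generator $j$, I pick a near-optimal attacked distribution, mix these with the deployment weights, and bound the distance from $P_\AI$ to the mixture.

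\textbf{Step 1 (near-minimizers).} Fix $\eta>0$. For each $j$ with $\pi_j>0$, choose $Q_j\in\Qset_s$ with $\TV(P_j,Q_j)\le\AttackRad_{s,j}+\eta$. Compactness of $\Qset_s$ and continuity of $\TV$ on the finite simplex would even give exact minimizers, but the $\eta$-argument avoids needing that. Each $Q_j$ has the form $(A_j)_\#P_s$, where $A_j$ is an editor supported on $\Gamma$.

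\textbf{Step 2 (mixture is reachable).} Set $\bar Q:=\sum_j\pi_jQ_j$. Convexity of $\Qset_s$ gives $\bar Q\in\Qset_s$. The direct construction is the editor
\[
\bar A(\cdot\mid x):=\sum_j\pi_jA_j(\cdot\mid x).
\]
Its support lies in $\Gamma(x)$, and $\bar A_\#P_s=\bar Q$ by linearity of the pushforward. Hence $\AttackRad_{s,\AI}\le\TV(P_\AI,\bar Q)$.

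\textbf{Step 3 (joint convexity).} Because $\sum_j\pi_j=1$, the triangle inequality for the absolute value gives
\[
\TV\Bigl(\sum_j\pi_jP_j,\sum_j\pi_jQ_j\Bigr)=\frac12\sum_x\Bigl|\sum_j\pi_j(P_j(x)-Q_j(x))\Bigr|\le\sum_j\pi_j\TV(P_j,Q_j).
\]
Combining this with Step 1 yields $\AttackRad_{s,\AI}\le\sum_j\pi_j\AttackRad_{s,j}+\eta$. Letting $\eta\to0$ gives the claim.

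The only step with real content is Step 2. It works because all targets share the same reachable family $\Qset_s$: the attacker can randomize over the per-generator editors, and the randomization stays inside $\Gamma$. As a check via \cref{thm:attackability-law}, the mixed target makes the defender's optimum an average, and averaging can only shrink the worst-case separation. The direct TV argument above is cleaner, though, so I would not route the proof through the theorem.
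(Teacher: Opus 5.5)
Your proof is correct and matches the paper's argument: mix per-generator attacked distributions with the weights $\pi_j$, use convexity of $\Qset_s$ to keep the mixture reachable, and apply joint convexity of total variation. The only difference is that you use $\eta$-near-minimizers, whereas the paper invokes compactness of $\Qset_s$ to take exact minimizers $Q_j^\star$; this is a harmless and slightly more economical variation.
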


\paragraph{Interpretation.}
A coarse label such as \textsc{AI} is a mixture of generator-specific targets, so it can be easier to hit than a particular generator target. This is why binary detection and model attribution need not have the same robustness profile. The proof of Proposition~\ref{prop:coarsening}, and finite examples showing that human-evasion and model-impersonation radii have no universal ordering, are in Appendix~\ref{app:proofs-barrier1}. Appendix~\ref{app:sandwich} explains how this statistical ceiling relates to the empirical gap upper bounds reported later: the image experiments diagnose failure on the evaluated samples without directly estimating \(\AttackRad_{s,t}\).

\section{Barrier II: Learnable Interfaces Collapse to Surrogate Attacks}
\label{sec:barrier2}

Barrier~II studies a different object: a fixed deployed verifier. Even when the statistical ceiling is not tight, a public interface may expose enough information to learn a useful surrogate. Let the attack region be
\[
    \U:=\bigcup_{x\in\X}\Gamma(x).
\]
A surrogate $\widehat f_t:\U\to[0,1]$ is an $\varepsilon$-emulator of $f_t$ on $\U$ if
\[
    \|\widehat f_t-f_t\|_{\infty,\U}
    :=
    \sup_{u\in\U}|\widehat f_t(u)-f_t(u)|
    \le\varepsilon.
\]
For an attack map $\widehat a(x)\in\Gamma(x)$, define its surrogate optimization loss
\[
    \alpha(x)
    :=
    \sup_{z\in\Gamma(x)}\widehat f_t(z)-\widehat f_t(\widehat a(x)),
    \qquad
    \bar\alpha:=\E_{X\sim P_s}[\alpha(X)].
\]
Here $\bar\alpha$ is the average loss of the attack optimizer on the surrogate objective.

\begin{theorem}[Surrogate-attack theorem]
\label{thm:surrogate-attack}
If $\widehat f_t$ is an $\varepsilon$-emulator of $f_t$ on $\U$, then any attack map $\widehat a$ satisfies
\[
    \E_{X\sim P_s}[f_t(\widehat a(X))]
    \ge
    W_{s\to t}(f_t;\Gamma)-2\varepsilon-\bar\alpha.
\]
\end{theorem}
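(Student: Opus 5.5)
The plan is a pointwise chain of inequalities followed by an expectation over $X\sim P_s$. The only nontrivial point is relating $W_{s\to t}(f_t;\Gamma)$, a supremum over attacked distributions, to a pointwise supremum over edits.

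\textbf{Step 1: rewrite the white-box optimum pointwise.} I would first show that
\[
W_{s\to t}(f_t;\Gamma)=\E_{X\sim P_s}\Bigl[\sup_{z\in\Gamma(X)} f_t(z)\Bigr].
\]
For any $Q=A_\#P_s\in\Qset_s$ we have $\E_Q f_t=\sum_x P_s(x)\sum_{z}A(z\mid x)f_t(z)$. Because $A(\cdot\mid x)$ is supported on $\Gamma(x)$, the inner sum is at most $\sup_{z\in\Gamma(x)}f_t(z)$, which gives ``$\le$''. For ``$\ge$'', $\X$ is finite, so each $\Gamma(x)$ is finite and nonempty and the supremum is attained at some $z^*(x)$. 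The deterministic editor $A(\cdot\mid x)=\delta_{z^*(x)}$ is admissible and achieves equality. This is the step I expect to need the most care, although finiteness makes it immediate. In the measurable extension one would instead use $\eta$-optimal selections and let $\eta\to0$.

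\textbf{Step 2: pointwise comparison via the emulator.} Fix $x$ and any $z\in\Gamma(x)\subseteq\U$. Then $\widehat a(x)\in\Gamma(x)\subseteq\U$, and the chain is
\[
f_t(\widehat a(x))\ge \widehat f_t(\widehat a(x))-\varepsilon
=\sup_{z'\in\Gamma(x)}\widehat f_t(z')-\alpha(x)-\varepsilon
\ge \widehat f_t(z)-\alpha(x)-\varepsilon
\ge f_t(z)-\alpha(x)-2\varepsilon .
\]
The first and last inequalities use $\|\widehat f_t-f_t\|_{\infty,\U}\le\varepsilon$, and the equality is the definition of $\alpha(x)$. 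Taking the supremum over $z\in\Gamma(x)$ gives
\[
f_t(\widehat a(x))\ge\sup_{z\in\Gamma(x)}f_t(z)-2\varepsilon-\alpha(x).
\]

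\textbf{Step 3: average.} Taking $\E_{X\sim P_s}$ of the last inequality gives
\[
\E_{X\sim P_s}[f_t(\widehat a(X))]\ge \E_{X\sim P_s}\Bigl[\sup_{z\in\Gamma(X)} f_t(z)\Bigr]-2\varepsilon-\bar\alpha .
\]
All quantities lie in $[0,1]$ and $\X$ is finite, so every expectation is a finite sum and there are no integrability issues. Substituting the identity from Step 1 for the first term on the right yields $W_{s\to t}(f_t;\Gamma)-2\varepsilon-\bar\alpha$, which is the claimed bound.
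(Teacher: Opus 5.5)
Your proposal is correct and follows the paper's proof essentially line by line. The paper also first establishes the pointwise identity $W_{s\to t}(f_t;\Gamma)=\E_{X\sim P_s}[\sup_{z\in\Gamma(X)}f_t(z)]$ (its proposition on the pointwise white-box optimum), then runs the same $\varepsilon$, $\alpha(x)$, $\varepsilon$ chain at each source point and averages. The only cosmetic difference is that you compare against an arbitrary $z\in\Gamma(x)$ and take the supremum afterwards, whereas the paper fixes an attained maximizer $z_x^\star$; your version makes Step~2 independent of attainment.
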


\paragraph{Intuition and proof sketch.}
There are only three losses. The surrogate may score the true white-box maximizer incorrectly by at most $\varepsilon$. The optimizer may lose $\alpha(x)$ on the surrogate. Finally, when the chosen point is evaluated by the deployed verifier, the surrogate may again differ by at most $\varepsilon$. Averaging over source samples gives the bound. The full proof is in Appendix~\ref{app:proofs-barrier2}.

\begin{corollary}[Learned emulators suffice]
\label[corollary]{cor:learned-emulator}
Suppose a query procedure returns a random surrogate $\widehat f_t$ such that
\[
    \Prb[\|\widehat f_t-f_t\|_{\infty,\U}\le\varepsilon]\ge1-\delta.
\]
Then any attack map optimized on the returned surrogate with expected loss $\bar\alpha$ satisfies
\[
\Prb\left[
    \E_{X\sim P_s}[f_t(\widehat a(X))]
    \ge
    W_{s\to t}(f_t;\Gamma)-2\varepsilon-\bar\alpha
\right]
\ge1-\delta.
\]
\end{corollary}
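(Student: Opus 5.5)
The plan is to deduce Corollary~\ref{cor:learned-emulator} from Theorem~\ref{thm:surrogate-attack} by conditioning on the realization of the random surrogate. Write $E$ for the good event $\{\|\widehat f_t-f_t\|_{\infty,\U}\le\varepsilon\}$, which by hypothesis satisfies $\Prb[E]\ge1-\delta$. The randomness here lives only in the query procedure. Once $\widehat f_t$ is fixed, the attack map $\widehat a$ optimized on it and the loss function $\alpha(x)$ are deterministic functions of $x$. Hence $\bar\alpha=\E_{X\sim P_s}[\alpha(X)]$ and the deployed acceptance $\E_{X\sim P_s}[f_t(\widehat a(X))]$ are well-defined quantities determined by the returned surrogate. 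Here the expectation is over $X\sim P_s$ only, and is independent of the query randomness.

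Next, fix any outcome in $E$. On that outcome, $\widehat f_t$ is an $\varepsilon$-emulator of $f_t$ on $\U$, so Theorem~\ref{thm:surrogate-attack} applies verbatim to the pair $(\widehat f_t,\widehat a)$. It gives
\[
\E_{X\sim P_s}[f_t(\widehat a(X))]\ge W_{s\to t}(f_t;\Gamma)-2\varepsilon-\bar\alpha.
\]
The white-box optimum $W_{s\to t}(f_t;\Gamma)$ depends only on the deployed verifier and $\Gamma$, not on the surrogate, so it is a constant across outcomes. Therefore $E$ is contained in the event in the corollary, and monotonicity of probability gives the probability at least $1-\delta$.

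The only delicate point is interpreting $\bar\alpha$. If $\bar\alpha$ is read as the realized surrogate loss, which may itself be random, the containment above goes through unchanged, because the inequality is pointwise on $E$. If instead $\bar\alpha$ is a fixed bound on the expected loss, I would state the assumption as holding on every realization, or at least on $E$. Then $-\bar\alpha$ on the right-hand side only weakens the bound, and the containment still holds. The measurability of the event follows in the finite-$\X$ setting, since all quantities are finite sums over $\X$. I expect no real obstacle beyond making this conditioning precise.
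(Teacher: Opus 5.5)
Your proposal is correct and matches the paper's proof: restrict to the good event $\{\|\widehat f_t-f_t\|_{\infty,\U}\le\varepsilon\}$, apply Theorem~\ref{thm:surrogate-attack} pointwise on that event, and use that the event has probability at least $1-\delta$. Your remarks on reading $\bar\alpha$ as either the realized loss or a fixed bound, and on measurability, are careful refinements that the paper leaves implicit.
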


This corollary is the main deployment message of Barrier~II. Exact parameter recovery is not required. Any learning procedure that gives a good emulator on the attack region gives a near-white-box targeted attack.

\begin{proposition}[Score-revealing public heads are identifiable]
\label{prop:score-recovery}
Let
\[
    g_H(j\mid x)
    =
    \frac{\exp(\langle h_j,\phi(x)\rangle)}
    {\sum_{\ell=1}^m \exp(\langle h_\ell,\phi(x)\rangle)}
\]
be a softmax attribution head over a public feature map $\phi:\X\to\R^p$, with gauge $h_m=0$. If the interface returns the full probability vector and the attacker can query $p$ points whose feature matrix is invertible, then the head parameters $H=(h_1,\dots,h_m)$ are exactly recoverable from $p$ queries. The binary logistic case is the $m=2$ specialization. Consequently, surrogate attacks on the recovered head match white-box attacks up to optimization error.
\end{proposition}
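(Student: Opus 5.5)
The plan is to linearize the softmax through log-ratios and then solve a linear system. Because the gauge fixes $h_m=0$, for every query $x$ and every class $j$ we have
\[
    \log\frac{g_H(j\mid x)}{g_H(m\mid x)}=\langle h_j,\phi(x)\rangle-\langle h_m,\phi(x)\rangle=\langle h_j,\phi(x)\rangle ,
\]
since the normalizing denominator cancels in the ratio. Every returned probability is strictly positive under a softmax, so these log-ratios are well defined and can be computed from the full probability vector.

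\textbf{Linear recovery.} Let the queries be $x_1,\dots,x_p$ and set $\Phi=[\phi(x_1),\dots,\phi(x_p)]^\top\in\R^{p\times p}$, which is invertible by assumption. For each $j<m$, stack the observed log-ratios into $r_j\in\R^p$ with $(r_j)_i=\log(g_H(j\mid x_i)/g_H(m\mid x_i))$. This gives $\Phi h_j=r_j$, hence $h_j=\Phi^{-1}r_j$. All $m-1$ nontrivial parameter vectors therefore come from the same $p$ queries, and $h_m=0$ by the gauge.

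\textbf{Uniqueness and the binary case.} I will note that any $H'$ satisfying the same gauge and reproducing the observed vectors must satisfy the same linear equations, so $H'=H$. This is the sense in which the head is identifiable. The gauge is necessary: adding a common vector to every $h_j$ leaves $g_H$ unchanged, and fixing $h_m=0$ removes exactly this ambiguity. For $m=2$, the head reduces to $g(1\mid x)=\sigma(\langle h_1,\phi(x)\rangle)$, and the log-ratio is just the logit $\log(g/(1-g))$, so the same argument recovers $h_1$ from $p$ logistic scores.

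\textbf{Consequence for attacks.} Since $\phi$ is public, the recovered $\widehat H=H$ gives a surrogate $\widehat f_t=g_{\widehat H}(t\mid\cdot)$ that equals $f_t$ on all of $\X$, and in particular on $\U$. It is therefore an $\varepsilon$-emulator with $\varepsilon=0$. Theorem~\ref{thm:surrogate-attack} then gives
\[
    \E_{P_s}[f_t(\widehat a(X))]\ge W_{s\to t}(f_t;\Gamma)-\bar\alpha ,
\]
which is the white-box optimum up to optimization error.

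\textbf{Main obstacle.} No step here is technically hard. The one point that needs care is justifying that all probabilities are strictly positive, which holds because the softmax has finite logits. The precise role of the gauge in the uniqueness claim also needs to be stated carefully, as above.
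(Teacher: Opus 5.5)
Your proposal is correct and follows essentially the same route as the paper. You take log-ratios against the gauge class to get the linear systems $\Phi h_j = r_j$, invert $\Phi$ to recover every $h_j$ from the same $p$ queries, treat the binary case as $m=2$, and feed the resulting exact emulator into Theorem~\ref{thm:surrogate-attack} with $\varepsilon=0$; your extra remarks on strict positivity of softmax outputs and on the gauge removing the common-shift ambiguity are correct and slightly more explicit than the paper.
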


\paragraph{Why this proposition is useful.}
A score-revealing softmax head exposes log-odds. With public features, those log-odds are linear equations in the unknown head weights. Once the probe features span $\R^p$, recovering the verifier is just solving linear systems. The detailed derivation is in Appendix~\ref{app:proofs-barrier2}.

\begin{proposition}[Stable recovery under logit error]
\label{prop:stable-recovery}
In the setting of Proposition~\ref{prop:score-recovery}, suppose the recovered log-odds at the $p$ probe points have error at most $\eta$ per coordinate, and let $\Phi$ be the probe feature matrix. Then each recovered class vector satisfies
\[
    \|\widehat h_j-h_j\|_2
    \le
    \|\Phi^{-1}\|_2\sqrt{p}\,\eta.
\]
If $\|\phi(u)\|_2\le R$ on the attack region, the induced logit error per class on the attack region is at most
\[
    B:=R\|\Phi^{-1}\|_2\sqrt{p}\,\eta.
\]
Thus each recovered target score $\widehat f_t$ is an $\varepsilon$-emulator of $f_t$, with $\varepsilon\le B/2$ for softmax, or $\varepsilon\le B/4$ for binary logistic in the reference-class gauge.
\end{proposition}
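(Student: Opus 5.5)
The plan is to reduce recovery to $m-1$ independent linear systems and then push the error through two Lipschitz maps. First, the linear error bound on $\widehat h_j$; second, the logit bound on the attack region; third, the score bound via Lipschitz constants of the softmax and sigmoid.

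\textbf{Step 1: recovery error for each class vector.} In the gauge $h_m=0$, the log-odds at a probe $x_i$ are $\log\bigl(g_H(j\mid x_i)/g_H(m\mid x_i)\bigr)=\langle h_j,\phi(x_i)\rangle$. Stacking the $p$ probes into the rows of $\Phi$, the true class vector solves $\Phi h_j=z_j$. The recovered vector solves $\Phi\widehat h_j=\widehat z_j$, where $\|\widehat z_j-z_j\|_\infty\le\eta$. Hence
\[
\widehat h_j-h_j=\Phi^{-1}(\widehat z_j-z_j),
\]
and therefore $\|\widehat h_j-h_j\|_2\le\|\Phi^{-1}\|_2\,\|\widehat z_j-z_j\|_2\le\|\Phi^{-1}\|_2\sqrt p\,\eta$. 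For $j=m$ both vectors are zero by the gauge, so the bound holds trivially.

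\textbf{Step 2: logit error on the attack region.} For $u\in\U$, Cauchy--Schwarz gives $|\langle\widehat h_j-h_j,\phi(u)\rangle|\le R\|\Phi^{-1}\|_2\sqrt p\,\eta=B$. So the two logit vectors differ by at most $B$ in each coordinate, with zero error on the reference class.

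\textbf{Step 3: converting logit error to score error.} This is the step that needs care. For the binary logistic case, $f_t=\sigma(\pm\ell)$ with $\sigma'\le 1/4$. The mean value theorem then gives $\varepsilon\le B/4$.

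For softmax, I would bound $|\mathrm{softmax}_t(a)-\mathrm{softmax}_t(b)|$ under $\|a-b\|_\infty\le B$. I would parametrize $a+s(b-a)$ for $s\in[0,1]$ and differentiate. The derivative is $p_t\bigl(\delta_t-\sum_\ell p_\ell\delta_\ell\bigr)$, where $\delta=b-a$. Because $\delta_t-\sum_\ell p_\ell\delta_\ell=\sum_\ell p_\ell(\delta_t-\delta_\ell)$, its absolute value is at most $p_t\sum_{\ell\ne t}p_\ell\,|\delta_t-\delta_\ell|\le p_t(1-p_t)\cdot 2B\le B/2$. Integrating over $s$ yields $\varepsilon\le B/2$.

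The main obstacle is this softmax Lipschitz constant. A naive $\ell_\infty$ bound gives the factor $2B\cdot p_t(1-p_t)$, and I would use $p_t(1-p_t)\le1/4$ to reach exactly $B/2$. If the paper intends the per-class error with the $h_m=0$ coordinate exact, one might sharpen this, but $B/2$ suffices.

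The final claim then follows directly from the definition of an $\varepsilon$-emulator. It can be combined with \cref{thm:surrogate-attack} if desired.
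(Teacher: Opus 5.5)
Your proposal is correct and follows the paper's proof essentially step for step. It uses the same $\widehat h_j-h_j=\Phi^{-1}e_j$ perturbation bound, Cauchy--Schwarz on $\U$, and the mean-value argument with $\sigma'\le 1/4$ for the binary case. Your path-derivative computation for softmax is simply an explicit verification of the paper's bound $\|\nabla s_t(\ell)\|_1=2s_t(1-s_t)\le\tfrac12$ combined with $\|\widehat\ell-\ell\|_\infty\le B$, and it yields the same constant $B/2$.
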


\paragraph{Interpretation.}
The exact-recovery assumption is not a knife-edge. With bounded log-odds error, recovery degrades with the conditioning of the probe matrix and the feature norm on the attack region. Deriving such a bound from noisy or quantized probabilities additionally requires the probabilities entering the ratios to stay bounded away from zero. Thus finite-precision score access can still yield the emulator required by \cref{thm:surrogate-attack}. A proof is in Appendix~\ref{app:proofs-barrier2}.

The recovery propositions require score-revealing interfaces and public features. They do not claim that every public verifier is exactly recoverable, nor that top-1 labels alone always reveal enough information. They are concrete instantiations of the broader learnability barrier.

\section{Private Interfaces Raise Query Cost, Not the Statistical Ceiling}
\label{sec:private}

Barrier~I applies to private and public image-only verifiers. Privacy cannot create provenance information that is absent from the image. What it can change is how hard it is for the attacker to learn and optimize the verifier. We isolate this effect with a simple hidden-state search bound.

\begin{theorem}[Low-bandwidth hidden-state search]
\label{thm:hidden-state}
Let $\Theta$ be a finite hidden verifier-state space, with $\theta\sim\mathrm{Unif}(\Theta)$. Suppose a $q$-query interface returns one of at most $M$ symbols per query. After observing the transcript, the attacker outputs an action $a$. If every fixed action succeeds on at most a $\rho$ fraction of hidden states,
\[
    \sup_a
    \Pr_{\theta\sim\mathrm{Unif}(\Theta)}[\mathrm{success}(a,\theta)]
    \le \rho,
\]
then every adaptive $q$-query attacker has success probability at most
\[
    M^q\rho.
\]
\end{theorem}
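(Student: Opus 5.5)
The plan is a counting argument over transcripts. First I would fix the attacker's internal randomness, so that a randomized attacker's success probability becomes an average over deterministic attackers. It then suffices to bound every deterministic adaptive attacker.

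A deterministic adaptive attacker is a decision tree. Query $k$ is a function of the previous answers, the interface's answer depends on the hidden state $\theta$ and the query, and the final action $a$ is a function of the full transcript $\tau\in\{1,\dots,M\}^q$. The transcript is therefore a deterministic function $\tau(\theta)$, and it takes at most $M^q$ distinct values. If the interface is randomized, I would also condition on its internal randomness, or absorb that randomness into an enlarged hidden state. The enlarged state must keep the per-action bound $\rho$; the cleanest route is to assume the answers are deterministic given $\theta$ and handle interface noise by averaging.

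Next, I would partition $\Theta$ into the fibers $\Theta_\tau=\{\theta:\tau(\theta)=\tau\}$. On each fiber the attacker plays the single fixed action $a(\tau)$. The success probability is
\[
\Pr_\theta[\mathrm{success}]=\sum_{\tau}\Pr_\theta[\tau(\theta)=\tau,\ \mathrm{success}(a(\tau),\theta)]\le\sum_{\tau}\Pr_\theta[\mathrm{success}(a(\tau),\theta)]\le M^q\rho.
\]
The first inequality drops the constraint that $\theta$ lies in the fiber. The second applies the hypothesis $\sup_a\Pr_\theta[\mathrm{success}(a,\theta)]\le\rho$ to each of the at most $M^q$ actions $a(\tau)$. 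Uniformity of $\theta$ is used only through this hypothesis, which is stated for the uniform distribution.

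The main subtlety is the first step, which handles randomness cleanly. For the attacker's randomness, success probability is linear in the mixture, so the bound for each deterministic attacker carries over. I would also note that adaptivity costs nothing extra. The tree may pose different queries on different branches, but it still has at most $M^q$ leaves, and each leaf commits to one action. The union bound over leaves is exactly the source of the $M^q$ factor.
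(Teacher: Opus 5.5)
Your proof is correct and follows the paper's argument: fix the attacker's randomness, observe that there are at most $M^q$ transcripts and each one commits to a single action, and union-bound the corresponding success sets, each of measure at most $\rho$. Your extra care about a randomized interface is harmless but not needed. With the attacker's coins fixed, the success event is contained in $\bigcup_\tau\{\theta:\mathrm{success}(a(\tau),\theta)\}$ however the transcript is generated, so the bound $M^q\rho$ follows directly.
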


\paragraph{Intuition.}
After $q$ queries, the attacker can see at most $M^q$ transcripts. Each transcript leads to one final action. If any fixed action only works on a $\rho$ fraction of hidden states, then the union over all transcripts covers at most an $M^q\rho$ fraction. The conclusion depends on this fixed-action success bound as well as the interface alphabet; low bandwidth alone does not imply security. The full counting proof is in Appendix~\ref{app:proofs-private}.

\begin{corollary}[Hidden-support threshold verifier]
\label[corollary]{cor:hidden-support}
In the hidden-support model with secret $S\subseteq[d]$, $|S|=k$, candidate budget $b$, and threshold $\tau$, a binary interface satisfies
\[
    \Pr[\mathrm{success}]
    \le
    2^q
    \frac{
        \sum_{i=\tau}^{\min\{k,b\}}
        \binom{b}{i}\binom{d-b}{k-i}
    }{
        \binom{d}{k}
    }.
\]
In particular,
\[
    \Pr[\mathrm{success}]
    \le
    2^q
    \binom{b}{\tau}
    \left(\frac{k}{d}\right)^\tau.
\]
\end{corollary}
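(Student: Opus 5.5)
The plan is to instantiate \cref{thm:hidden-state} with $\Theta$ the family of all $k$-subsets $S\subseteq[d]$, $\theta=S$ uniform, and $M=2$ for the binary interface. This gives $\Pr[\mathrm{success}]\le 2^q\rho$. The only real work is computing the fixed-action success fraction $\rho$ for the hidden-support threshold verifier.

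\textbf{Step 1: the model.} I read the hidden-support model as follows. An attacker action is a candidate set $B\subseteq[d]$ with $|B|\le b$. The action succeeds on secret $S$ iff $|B\cap S|\ge\tau$, meaning the threshold verifier accepts once at least $\tau$ secret coordinates are hit. Enlarging $B$ can only help, so I take $|B|=b$ without loss of generality. For smaller $B$, one can pad it to size $b$, and the success event only grows.

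\textbf{Step 2: computing $\rho$.} For a fixed $B$ with $|B|=b$, the number of $k$-subsets $S$ with $|B\cap S|=i$ is $\binom{b}{i}\binom{d-b}{k-i}$. So the success probability under uniform $S$ is the hypergeometric tail
\[
\rho=\frac{\sum_{i=\tau}^{\min\{k,b\}}\binom{b}{i}\binom{d-b}{k-i}}{\binom{d}{k}}.
\]
This quantity does not depend on which $B$ is chosen, so the supremum over actions equals it. Plugging into \cref{thm:hidden-state} gives the first display.

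\textbf{Step 3: the simplified bound.} I would avoid manipulating the hypergeometric sum directly and use a union bound instead. The event $|B\cap S|\ge\tau$ implies that some $\tau$-subset $T\subseteq B$ satisfies $T\subseteq S$. There are $\binom{b}{\tau}$ choices of $T$. For each fixed $T$,
\[
\Pr[T\subseteq S]=\frac{\binom{d-\tau}{k-\tau}}{\binom{d}{k}}=\prod_{j=0}^{\tau-1}\frac{k-j}{d-j}\le\left(\frac{k}{d}\right)^\tau,
\]
since $(k-j)/(d-j)\le k/d$ whenever $k\le d$. Hence $\rho\le\binom{b}{\tau}(k/d)^\tau$, which gives the second display.

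\textbf{Main obstacle.} The main obstacle is fixing the model precisely, in particular that success depends only on $|B\cap S|\ge\tau$ and that actions are limited to $b$ candidates. The edge case $\tau>\min\{k,b\}$ also needs a remark: there the sum is empty and $\rho=0$, which is consistent with both bounds.
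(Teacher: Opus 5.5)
Your proposal is correct and matches the paper's proof essentially step for step. Both apply \cref{thm:hidden-state} with $M=2$ and $\rho$ equal to the hypergeometric tail, after padding any $|B|<b$ up to size $b$. Both then bound that tail by a union bound over the $\binom{b}{\tau}$ subsets $T\subseteq B$, using $\Pr[T\subseteq S]=\prod_{r=0}^{\tau-1}\frac{k-r}{d-r}\le(k/d)^\tau$.
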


The hidden-support model is deliberately simple. It should be read as a clean example of the hidden-state theorem, not as a literal model of a deployed provenance service. With abstention, the interface alphabet may grow, but the accepting region can shrink; this is why abstention can reduce attack success even though it adds another possible output symbol. Details and the sparse-support query lower bound are in Appendix~\ref{app:proofs-private}.

\section{Empirical Evaluation}
\label{sec:exp}

The image experiments keep the source--target task fixed and vary what the verifier reveals: full score vectors, single target scores, quantized scores, top-1 labels, binary decisions, and binary decisions with abstention. They measure attacks on fixed verifiers, while a separate finite-state experiment checks Barrier~I where its minimax value is computable. Appendix~\ref{app:exp-details} reports the original per-direction results and controls; Appendix~\ref{app:additional} gives the attack protocol and additional evaluations. The interfaces are local API-style abstractions; no live commercial service is evaluated.

\paragraph{Benchmarks and verifiers.}
We use two same-prompt benchmarks. The controlled three-class benchmark uses 1{,}200 MS COCO prompts~\cite{lin2014microsoft} to create aligned triplets of real, Stable Diffusion~1.5~\cite{rombach2022high}, and SDXL images, with a prompt-level split of 600/300/300 for train/validation/test and 900 held-out test images. The expanded benchmark keeps the same prompt protocol and adds Stable Diffusion~2.1 and PixArt. Verifiers include frozen CLIP~\cite{radford2021learning} representations with learned linear or MLP heads, and off-the-shelf CLIP-based detector baselines. Attack success rate (ASR) is the fraction of attacked source images accepted as the target under the evaluated interface. The additional ResNet-18 and image-quality evaluations also report induced success among initially correct sources, excluding pre-existing target predictions.

\paragraph{Attack protocol.}
The attacker recovers or fits an emulator from interface replies, runs targeted projected gradient descent (PGD) through the public feature extractor and emulator, and evaluates the final images on the deployed verifier. Pixel attacks use $\ell_\infty$ budgets $2/255$, $4/255$, or $8/255$, 40 steps, step size one quarter of the budget, and five restarts. Appendix~\ref{app:additional-protocol} specifies the interface-dependent losses and transcript counts.

\paragraph{Empirical upper bound on the robust gap.}
For an evaluated attack routine $\widetilde a$, we report
\[
    \widetilde\Gap_{s,t}
    :=
    \Acc_t(f_t)
    -
    \E_{X\sim P_s}[f_t(\widetilde a(X))].
\]
Because $W_{s\to t}(f_t;\Gamma)\ge \E[f_t(\widetilde a(X))]$, we have
\[
    \Gap_{s,t}(f_t)\le \widetilde\Gap_{s,t}.
\]
This inequality holds for population expectations: a negative attack-derived upper bound certifies a negative robust gap for that fixed verifier, whereas a positive bound leaves its sign unresolved. The image tables estimate these expectations by held-out sample averages and retain the notation $\widetilde\Gap$; their signs describe the evaluated samples, without a population confidence guarantee. They do not estimate the minimax radius $\AttackRad_{s,t}$ (Appendix~\ref{app:sandwich}).

\paragraph{Controlled check of Barrier~I.}
On $\X=\{0,1\}^6$, five source--target pairs and Hamming radii $0,1,2,3$ give 20 settings. Independent attack-side transport and verifier-side minimax linear programs agree to a maximum absolute discrepancy of $4.44\times10^{-16}$. Across 20 repeats per setting, plug-in mean absolute error falls from $0.02094$ at $n=250$ to $0.00451$ at $n=5{,}000$ (Table~\ref{tab:finite-state}). This checks the computable finite-state identity; the radius for realistic images remains unestimated.

\subsection{Public score access collapses the controlled verifier}
\label{subsec:exp-public-collapse}

The controlled three-class CLIP-softmax verifier is public, score revealing, and simple enough to instantiate Proposition~\ref{prop:score-recovery}. It is not a vacuous model: on the held-out test set it reaches $0.5167$ accuracy and $0.4802$ macro-F1 on the fine-grained three-way task. The induced fake-vs-real detector reaches fake AUC $0.8872$, TPR@5\%FPR $0.6183$, and EER $0.2192$.

\begin{figure}[t]
\centering
\IfFileExists{fig_public_collapse.png}{%
\includegraphics[width=0.99\linewidth]{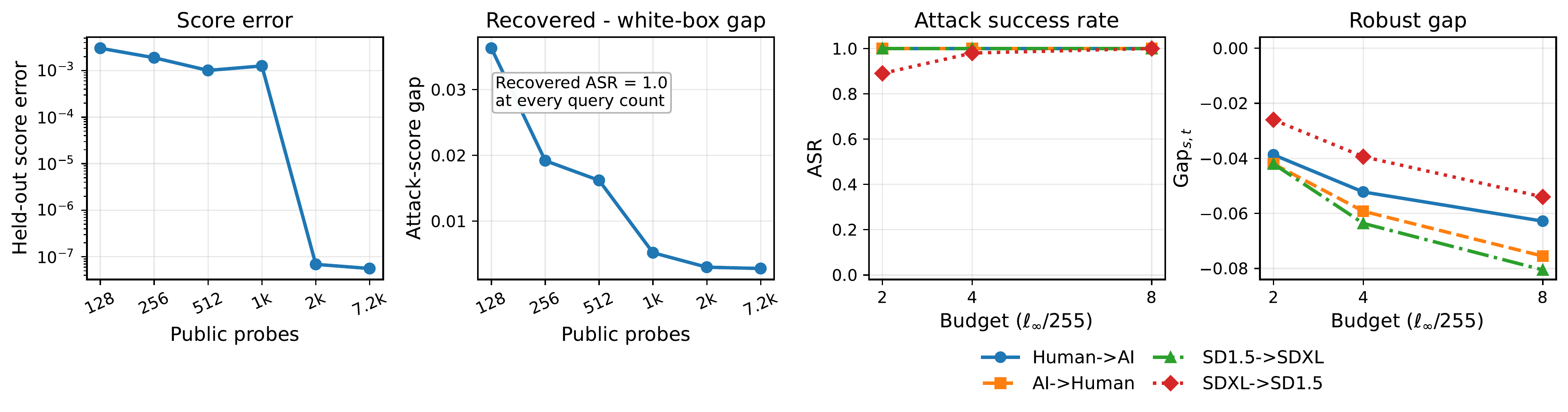}%
}{%
\fbox{\parbox{0.92\linewidth}{\centering Placeholder for \texttt{fig\_public\_collapse.pdf}. The figure shows exact recovery, near-white-box attack performance, collapse across four source--target directions, and negative empirical gap upper bounds.}}%
}
\caption{Controlled public CLIP-softmax verifier on the same-prompt three-class benchmark. Score-revealing queries recover the deployed score function; attacks on the recovered verifier approach white-box performance; all four source--target directions collapse; and every estimated gap upper bound is negative on the evaluated samples.}
\Description{A four-panel figure showing recovery error, recovered-versus-white-box attack gap, attack success rates, and empirical gap upper bounds.}
\label{fig:public-collapse}
\end{figure}

At 7{,}200 probes, the observed score error is $5.60\times10^{-8}$, and attacks on the recovered head differ from direct white-box attacks by $0.0028$ in target score. By 2{,}048 queries, score error is already $6.88\times10^{-8}$ and the recovered-vs-white-box gap is $0.0030$. Attacks optimized on the recovered verifier achieve ASR $1.0$. These measurements instantiate Proposition~\ref{prop:score-recovery}: public score vectors turn black-box attack into head identification followed by white-box optimization. Figure~\ref{fig:public-collapse} reads from left to right as score recovery, difference from direct white-box optimization, directional ASR, and empirical gap upper bounds. At $8/255$, Human$\to$AI has clean-target and attacked-source scores $0.6786$ and $0.7414$ (reported gap $-0.0628$); AI$\to$Human has $0.3338$ and $0.4094$ (reported gap $-0.0755$). Reversing direction changes both distributions and target scores; the two gaps need not coincide. Table~\ref{tab:d1-attack-appendix} gives all directions.

Three source--target directions reach ASR $1.0$ at $2/255$: Human$\to$AI false-flagging, AI$\to$Human evasion, and SD1.5$\to$SDXL impersonation. SDXL$\to$SD1.5 rises from ASR $0.89$ at $2/255$ to $1.0$ by $8/255$. Every empirical upper bound on the robust gap is negative, so edited source images exceed the target label's own clean acceptance, not merely its decision boundary.

\subsection{Interface limits matter more than clean strength}
\label{subsec:exp-interface}

The expanded five-class benchmark tests whether changing the released interface changes attackability. \Cref{tab:interface-main} shows the main pattern. Full vectors and single scores remain highly attackable. Top-1 labels reduce ASR for the linear head but not for the nonlinear head or the off-the-shelf CLIP detector. Plain binary feedback gives ASR $1.0$ for both learned heads, with estimated gap upper bounds $-0.0260$ and $-0.0498$. Binary feedback with abstention reduces ASR and yields positive estimated upper bounds in this CLIP comparison; these positive values do not certify robustness. Appendix~\ref{app:exp-details} reports clean performance, transcript-cap sweeps, and a score-guided random-search control.

\begin{table}[t]
\centering
\caption{Representative $8/255$ interface slice on the expanded five-class benchmark. Entries are ASR; a dash denotes an unevaluated combination. Plain binary feedback remains attackable for both learned heads. Binary feedback with abstention uses the configured per-split cap $7{,}200$ and reduces ASR across these three systems.}
\label{tab:interface-main}
\normalsize
\setlength{\tabcolsep}{6pt}
\begin{tabular}{@{}lccccc@{}}
  \toprule
  \textbf{Verifier}
    & \shortstack[c]{Full\\vector}
    & \shortstack[c]{Single\\score}
    & \shortstack[c]{Top-1\\label}
    & \shortstack[c]{Binary\\verdict}
    & \shortstack[c]{Binary +\\abstain} \\
  \midrule
  Linear softmax CLIP head & 1.00 & 0.91 & 0.75 & 1.00 & 0.00 \\
  Two-layer MLP CLIP head & 1.00 & 0.93 & 1.00 & 1.00 & 0.00 \\
  Off-the-shelf CLIP detector & 1.00 & 1.00 & 1.00 & -- & 0.09 \\
  \bottomrule
\end{tabular}
\end{table}

For the shared-representation CLIP heads, the validation-set $95$th-percentile emulator diagnostic $\epsilon_{95}$ rises from about $0.005$ under full vectors to $0.31$--$0.33$ under a single score, about $0.70$ under plain binary feedback, and $1.0$ under top-1 labels or abstention. This diagnostic is not the uniform attack-region error $\varepsilon$ in \cref{thm:surrogate-attack}; the sweep also does not estimate $\bar\alpha$, so it does not numerically validate the $2\varepsilon+\bar\alpha$ bound. High validation error alone does not imply safety: top-1 attacks still succeed against the MLP and off-the-shelf detector.

\subsection{Stronger retraining still does not restore robustness}
\label{subsec:exp-retrain-control}

A natural objection is that public-interface collapse is caused by weak clean training. \Cref{tab:advd1-summary} tests this directly on the original three-class benchmark. The stronger clean retrain improves accuracy from $0.517$ to $0.869$ and fake AUC from $0.887$ to $0.982$, but its empirical gap upper bounds remain negative. Adversarial retraining reduces the severity of some attacks, but the estimated gap upper bounds remain negative at $8/255$. Supporting per-direction results are in \cref{tab:advd1-attack-appendix}.

\begin{table}[t]
\centering
\caption{Released and retrained models on the original three-class benchmark. Clean columns report test accuracy and fake AUC; attack columns summarize the minimum ASR, maximum (best) gap, or minimum (worst) gap across the four evaluated source--target directions at the indicated budget.}
\label{tab:advd1-summary}
\normalsize
\setlength{\tabcolsep}{4.8pt}
\begin{tabular}{@{}lcccccc@{}}
\toprule
\textbf{Model} & \textbf{Acc.} & \textbf{AUC} & \shortstack[c]{Min ASR\\$2/255$} & \shortstack[c]{Min ASR\\$8/255$} & \shortstack[c]{Best gap\\$2/255$} & \shortstack[c]{Worst gap\\$8/255$} \\
\midrule
Released & 0.517 & 0.887 & 0.890 & 1.000 & $-0.026$ & $-0.081$ \\
Clean retrain & 0.869 & 0.982 & 0.967 & 1.000 & $-0.169$ & $-0.364$ \\
Adv. retrain & 0.821 & 0.947 & 0.900 & 1.000 & $-0.024$ & $-0.185$ \\
\bottomrule
\end{tabular}
\end{table}

\subsection{Transfer beyond CLIP and paired image quality}
\label{subsec:exp-additional}

An RGB ResNet-18~\cite{he2016deep} victim reaches test fake AUC $0.9862$. At $8/255$, attacks through a CLIP-feature emulator induce fake-to-real success on $45/117$ initially correct sources ($38.46\%$), but real-to-fake success on $0/122$. Both estimated gap upper bounds remain positive (Table~\ref{tab:resnet-transfer}). This supports partial directional transfer across representations for one victim and setting, rather than general collapse beyond CLIP.

In a paired evaluation of four attack directions, each budget has 64 attack instances, of which 50 start from correct predictions. At $2/255$, $45/50$ are induced successes, with pooled success-only medians of $0.9784$ structural similarity (SSIM)~\cite{wang2004ssim}, $0.0063$ learned perceptual image patch similarity (LPIPS-Alex)~\cite{zhang2018unreasonable}, and $43.82$~dB peak signal-to-noise ratio (PSNR). At $8/255$, induced success reaches $50/50$, while per-direction median SSIM falls to $0.8011$--$0.8534$ and LPIPS-Alex rises to $0.0770$--$0.1365$ (Table~\ref{tab:paired-quality}). The larger budget trades paired image quality for five additional successes. Figure~\ref{fig:attack-examples} shows a success and a failure; Appendix~\ref{app:additional-quality} includes paired examples at both budgets. These examples and metrics do not constitute a human perceptual study.

\begin{figure}[t]
\centering
\includegraphics[width=\linewidth]{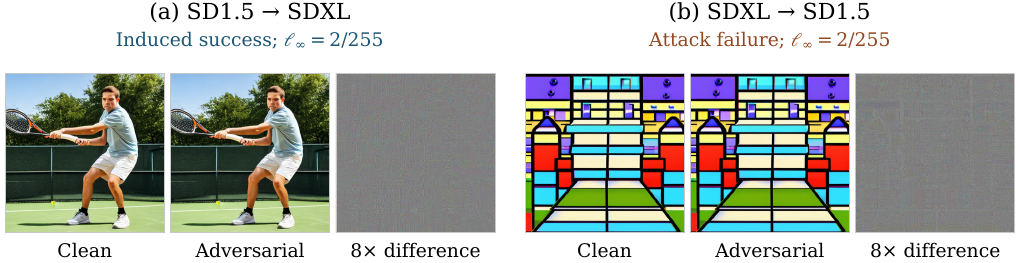}
\caption{Paired examples at an $\ell_\infty$ budget of $2/255$. Each triplet shows the clean image, the attacked image, and the signed perturbation amplified eightfold around mid-gray. The SD1.5$\to$SDXL example induces target acceptance, while the SDXL$\to$SD1.5 example fails. These selected examples illustrate both outcomes.}
\label{fig:attack-examples}
\end{figure}

\section{Discussion and Limitations}
\label{sec:discussion}

The paper separates two failure modes that are easy to confuse. The attackability-radius law is an image-only statistical ceiling: it applies even to the best possible passive verifier and depends only on $P_t$, $P_s$, and $\Gamma$. The surrogate-attack theorem is an interface statement: it explains how one deployed public verifier can fail before that ceiling is reached.

The experiments support the second mechanism in the evaluated settings. Public CLIP-based interfaces are learnable enough that targeted attacks reach or exceed the target label's own clean acceptance. Stronger clean retraining and adversarial retraining improve clean metrics but do not restore positive empirical upper bounds on the robust gap.

\paragraph{What Barrier~I does and does not say.}
Barrier~I is an optimistic benchmark for passive verification. It asks what would be possible if the defender could choose the best possible image-only verifier, ignoring architecture, computation, and training data. A small $\AttackRad_{s,t}$ means the edited source family can get close to the target distribution, so there is little robust separation available to any passive method. A large $\AttackRad_{s,t}$ does not mean a deployed model is safe. It only means that the image distributions leave room for a good verifier in principle.

\paragraph{What Barrier~II adds.}
Barrier~II explains how a real verifier can lose that room. The theorem does not require the attacker to match the target distribution. It only requires the attacker to learn enough of the deployed score on the attack region. This is why the empirical upper bound $\widetilde\Gap$ is useful: when it is negative, the attacked source samples already receive higher target score than clean target samples under the deployed verifier. That is a failure of the deployed verifier, even if the unknown statistical ceiling $\AttackRad_{s,t}$ is not directly estimated.

\paragraph{What the private-interface results mean.}
The private-interface theorem is conditional on hidden verifier state and the fixed-action success bound $\rho$. It shows how limited transcripts can constrain search under those assumptions. In the CLIP experiments, abstention changes the accepting region and reduces measured attack success; its positive estimated gap upper bounds leave robustness unresolved. Such interface restrictions cannot overcome a small attackability radius.

\paragraph{How to use the results.}
For a new passive provenance system, clean accuracy should be treated as only a first diagnostic. A more relevant evaluation should report source--target attack directions, the interface exposed to the attacker, attack success rate, and the empirical upper bound $\widetilde\Gap$ on the robust gap. If public scores are released, the system should also be tested for surrogate learnability. If a private interface is used, the evaluation should test whether its accepting region can still be discovered through queries. The appendices give the full proofs and the extra empirical tables needed to audit these claims.

\paragraph{What should not be inferred.}
Our results do not say that passive provenance is useless. They say that a passive verifier has two separate obligations. First, the source and target distributions must remain separated after the allowed edits; this is the Barrier~I question. Second, the deployed interface must not hand the attacker a useful optimization signal; this is the Barrier~II question. A system can have strong clean accuracy and still fail the second obligation. Conversely, a private interface can make attacks harder while still leaving the first obligation unresolved.

Several limitations remain. First, the finite-state calculation does not estimate $\AttackRad_{s,t}$ for realistic image distributions; Appendix~\ref{app:sandwich} describes bounds whose tightness remains unresolved. Second, the private-interface theorem is a stylized query-complexity model, not a security proof for a deployed auditor. Third, the image evidence is primarily CLIP-centered; the additional ResNet-18 result covers one victim, one seed, one score interface, and partial transfer in one direction. All image experiments use same-prompt real/diffusion benchmarks and bounded pixel edits. Semantic edits, non-diffusion sources, and live commercial APIs are not evaluated. Finally, trusted marks, keys, signatures, or authenticated metadata require a separate provenance threat model; no active or hybrid system is evaluated here.

The deployment lesson is that clean provenance accuracy is not a robustness certificate. A passive verifier must be evaluated jointly with the interface through which it is exposed. In practice, this means that a robust evaluation should state three objects together: the source--target direction, the edit class, and the released interface. Omitting any one of these can invert the conclusion. A detector may look strong when averaged over a binary human-versus-AI task, weak when the same images are evaluated as generator impersonation, and weaker still when the public interface reveals scores that make the verifier easy to emulate. In the evaluated CLIP systems, abstention improves the attack-derived gap estimate without changing the source distributions. This observation does not establish the sign of the true robust gap or a change in the statistical ceiling.

\section{Conclusion}

We formulated passive image provenance as source--target verification under adversarial distribution shift. The resulting minimax game has an exact value: the best robust gap achievable by any image-only verifier is the total-variation distance from the target distribution to the attacked source family. We then showed a separate public-interface barrier: once a verifier is learnable on the attack region, surrogate black-box attacks approach white-box performance. A finite-state calculation checks the minimax identity. On the same-prompt image benchmarks, the evaluated public CLIP verifiers fail under targeted pixel attacks; a ResNet-18 victim shows partial fake-to-real transfer, and paired metrics quantify the image-quality trade-off. Abstention reduces measured attack success in the CLIP comparison, while positive estimated gap upper bounds leave robustness unresolved. Robust passive provenance therefore requires both statistical source--target analysis and interface-aware evaluation. The central recommendation is to report these two axes separately: estimate or bracket the image-level source--target separation where possible, and stress-test the exact interface that users or adversaries will observe. This framing also makes negative results actionable: a failure caused by a small distributional radius calls for stronger provenance signals or active marks, whereas a failure caused by a learnable interface calls for interface redesign and a renewed black-box audit.

\begin{ack}
We thank our anonymous reviewers for their valuable feedback, which substantially improved the paper. This work was supported by the Edinburgh International Data Facility (EIDF) and the Data-Driven Innovation Programme at the University of Edinburgh.
\end{ack}

\appendix

\section{Additional Formal Details for Barrier I}
\label{app:proofs-barrier1}

\subsection{Reachable attacked distributions}

\begin{lemma}[Reachable sets are convex and compact]
\label[lemma]{lem:qset-convex}
For every source label $s$, the set $\Qset_s$ is nonempty, convex, and compact.
\end{lemma}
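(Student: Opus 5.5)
The plan is to realize $\Qset_s$ as the image of a compact convex set under a linear map; all three properties then follow at once. Since $\X$ is finite, a stochastic editor supported on $\Gamma$ is a row-stochastic matrix $A=(A(z\mid x))_{x,z\in\X}$ with $A(z\mid x)=0$ whenever $z\notin\Gamma(x)$. Let $\Kset\subseteq\R^{\X\times\X}$ be the set of all such matrices. It is defined by finitely many linear equalities ($\sum_z A(z\mid x)=1$ for each $x$, and $A(z\mid x)=0$ for $z\notin\Gamma(x)$) and inequalities ($A(z\mid x)\ge 0$). Hence $\Kset$ is a polytope in the finite-dimensional space $\R^{\X\times\X}$.

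First I establish nonemptiness. Because $\Gamma(x)\neq\emptyset$ for every $x$, I can pick some $z_x\in\Gamma(x)$ and set $A(\cdot\mid x)=\delta_{z_x}$. This deterministic editor lies in $\Kset$, so $\Kset\neq\emptyset$ and therefore $\Qset_s\neq\emptyset$.

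Second, I check that $\Kset$ is convex and compact. Convexity holds because the defining constraints are linear. Compactness holds because $\Kset$ is closed, being cut out by non-strict constraints, and bounded, since $\Kset\subseteq[0,1]^{\X\times\X}$; the Heine--Borel theorem then applies.

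Third, I transfer these properties to $\Qset_s$. The pushforward map $T:\Kset\to\R^\X$, $T(A)(z)=\sum_{x}P_s(x)A(z\mid x)$, is linear in $A$, and by definition $\Qset_s=T(\Kset)$. A linear image of a convex set is convex, and a continuous image of a compact set is compact. Moreover, $T(A)$ has nonnegative entries summing to one, so $\Qset_s$ is a compact convex subset of the simplex $\mathcal{P}(\X)$.

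I expect no step to be a real obstacle. The only care needed is to note that compactness in $\R^\X$ agrees with compactness in total variation, since all norms are equivalent in finite dimension. This is the form the Sion argument for \cref{thm:attackability-law} will need. One further subtlety is that $P_s(x)=0$ rows are unconstrained in effect, but this does not matter because $T$ is still well defined and linear.
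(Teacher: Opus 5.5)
Your proposal is correct and follows essentially the same route as the paper: the paper likewise writes the admissible editors as a nonempty product of simplices $\Kset_\Gamma$ and obtains $\Qset_s$ as its image under the linear, continuous pushforward $A\mapsto A_\#P_s$. Your extra remarks, that compactness in $\R^\X$ coincides with total-variation compactness and that rows with $P_s(x)=0$ are harmless, are correct refinements the paper leaves implicit.
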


\begin{proof}
Let
\[
\Kset_\Gamma:=
\left\{
A(\cdot\mid x)_{x\in\X}:
\begin{array}{l}
A(\cdot\mid x)\text{ is a distribution}\\
\text{supported on }\Gamma(x)\text{ for every }x\in\X
\end{array}
\right\}.
\]
Because $\X$ is finite and each $\Gamma(x)$ is nonempty, $\Kset_\Gamma$ is a nonempty product of simplices, hence compact and convex. The map
\[
\Psi:\Kset_\Gamma\to\mathcal{P}(\X),\qquad \Psi(A)=A_\#P_s,
\]
is linear and continuous. Therefore $\Qset_s=\Psi(\Kset_\Gamma)$ is nonempty, compact, and convex.
\end{proof}

\begin{proposition}[White-box optimum is pointwise]
\label{prop:white-box-pointwise}
For every target verifier $f_t:\X\to[0,1]$,
\[
    W_{s\to t}(f_t;\Gamma)=\E_{X\sim P_s}\Big[\sup_{z\in\Gamma(X)} f_t(z)\Big].
\]
In particular, randomized attacks do not outperform a pointwise maximizer.
\end{proposition}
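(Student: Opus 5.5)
The plan is to prove the identity by establishing two inequalities. The whole argument rests on the finiteness of $\X$, which makes every supremum over $\Gamma(x)$ a maximum and every expectation a finite sum.

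\emph{Upper bound.} Fix any $Q=A_\#P_s\in\Qset_s$ with kernel $A$ supported on $\Gamma(x)$. I will expand
\[
\E_Q[f_t]=\sum_{x\in\X}P_s(x)\sum_{z\in\Gamma(x)}A(z\mid x)f_t(z).
\]
For each $x$, the inner sum is a convex combination of the values $f_t(z)$ with $z\in\Gamma(x)$. It is therefore at most $\max_{z\in\Gamma(x)}f_t(z)$, which is finite and attained because $\Gamma(x)$ is a nonempty finite set. Summing against $P_s$ gives
\[
\E_Q[f_t]\le\E_{X\sim P_s}\Big[\sup_{z\in\Gamma(X)}f_t(z)\Big].
\]
Taking the supremum over $Q$ yields $W_{s\to t}(f_t;\Gamma)\le\E_{P_s}[\sup_{z\in\Gamma(X)}f_t(z)]$. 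This same step shows that randomization inside the kernel cannot help the attacker.

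\emph{Lower bound.} For each $x\in\X$, I choose a maximizer $a^\ast(x)\in\argmax_{z\in\Gamma(x)}f_t(z)$; this exists by finiteness and needs no measurable-selection issues. I then take the deterministic kernel $A^\ast(\cdot\mid x)=\delta_{a^\ast(x)}$. It is supported on $\Gamma(x)$, so $Q^\ast:=A^\ast_\#P_s\in\Qset_s$. Its value is
\[
\E_{Q^\ast}[f_t]=\sum_x P_s(x)f_t(a^\ast(x))=\E_{P_s}\Big[\sup_{z\in\Gamma(X)}f_t(z)\Big],
\]
which gives the reverse inequality. The supremum defining $W_{s\to t}$ is therefore attained by the pointwise maximizer, which also proves the ``in particular'' clause.

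There is no real obstacle here. The only care needed is to note that the argmax exists, which holds because $\Gamma(x)$ is a nonempty subset of the finite set $\X$. In the continuous extension of Appendix~\ref{app:continuous-route}, one would instead use $\epsilon$-approximate measurable selections, but that is not needed in the finite setting.
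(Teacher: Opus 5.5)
Your proposal is correct and follows the paper's own proof. Both bound the per-$x$ kernel average by $\sup_{z\in\Gamma(x)}f_t(z)$ and average over $P_s$ for the upper bound. Both then attain equality with the deterministic kernel $\delta_{a(x)}$ at a pointwise maximizer, which exists because $\X$ is finite and each $\Gamma(x)$ is nonempty.
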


\begin{proof}
For any admissible attack kernel $A$ and each $x\in\X$,
\[
    \E_{Z\sim A(\cdot\mid x)}[f_t(Z)]\le \sup_{z\in\Gamma(x)}f_t(z).
\]
Averaging over $X\sim P_s$ gives the upper bound. Since $\X$ is finite and $\Gamma(x)$ is nonempty, for each $x$ there exists a maximizer $a_t(x)\in\argmax_{z\in\Gamma(x)}f_t(z)$. The deterministic kernel $A_t(\cdot\mid x)=\delta_{a_t(x)}$ attains equality.
\end{proof}

\subsection{Total variation preliminaries}

\begin{lemma}[Variational form of total variation]
\label[lemma]{lem:tv-variational}
For any two distributions $P,Q$ on $\X$,
\[
    \TV(P,Q)=\sup_{f:\X\to[0,1]}\bigl(\E_P[f]-\E_Q[f]\bigr).
\]
\end{lemma}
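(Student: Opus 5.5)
The plan is to prove the two inequalities separately, using the fact that the functional $f\mapsto \E_P[f]-\E_Q[f]=\sum_{x\in\X} f(x)\bigl(P(x)-Q(x)\bigr)$ is linear in $f$ over the cube $[0,1]^\X$. A linear functional on a box is maximized coordinatewise, so the supremum is attained at an indicator of a set. The natural choice is $A:=\{x\in\X: P(x)>Q(x)\}$.

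For the upper bound, fix any $f:\X\to[0,1]$ and split the sum over $A$ and its complement $A^c$. On $A$ the coefficient $P(x)-Q(x)$ is positive, so each term is at most $P(x)-Q(x)$ because $f\le 1$. On $A^c$ the coefficient is nonpositive, so each term is at most $0$ because $f\ge 0$. Hence
\[
\E_P[f]-\E_Q[f]\le \sum_{x\in A}\bigl(P(x)-Q(x)\bigr).
\]

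For the lower bound, take $f=\one_A$. It lies in $[0,1]^\X$ and attains this value exactly. Since $\X$ is finite, the supremum is in fact a maximum.

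The remaining step is to identify $\sum_{x\in A}(P(x)-Q(x))$ with $\TV(P,Q)=\tfrac12\sum_x|P(x)-Q(x)|$. Both $P$ and $Q$ sum to one, so $\sum_x (P(x)-Q(x))=0$. The positive part of $P-Q$ therefore equals its negative part:
\[
\sum_{x\in A}(P-Q)(x)=\sum_{x\in A^c}(Q-P)(x).
\]
The total absolute sum is the sum of these two equal quantities, so each equals half of it, which is $\TV(P,Q)$.

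None of these steps is a real obstacle; the only point needing care is this last mass-balance identity, which is where the normalization of $P$ and $Q$ enters.
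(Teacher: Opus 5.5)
Your proof is correct and follows essentially the same route as the paper. Both bound $\sum_x f(x)(P(x)-Q(x))$ by the positive part of $P-Q$ and attain that value with an indicator function. The one difference is that you explicitly verify, using $\sum_x (P(x)-Q(x))=0$, that this positive part equals $\TV(P,Q)$, whereas the paper takes that identity (and $\sup_B (P(B)-Q(B))=\TV(P,Q)$) as known.
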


\begin{proof}
For any event $B\subseteq\X$, the indicator $\one_B$ belongs to $[0,1]^\X$, so
\[
\sup_{f:\X\to[0,1]} (\E_P[f]-\E_Q[f])\ge \sup_{B\subseteq\X}(P(B)-Q(B))=\TV(P,Q).
\]
Conversely, for any $f:\X\to[0,1]$,
\[
\E_P[f]-\E_Q[f]
=\sum_{x\in\X} f(x)(P(x)-Q(x))
\le \sum_{x:P(x)\ge Q(x)}(P(x)-Q(x))
=\TV(P,Q).
\]
Taking the supremum proves equality.
\end{proof}

\begin{proposition}[Best verifier against a fixed attacked distribution]
\label{prop:fixedQ-best-test}
Fix a target distribution $P_t$ and any comparison distribution $Q$ on $\X$. Define
\[
A_Q:=\{x\in\X: P_t(x)\ge Q(x)\},
\qquad
f_Q:=\one_{A_Q}.
\]
Then
\[
\E_{P_t}[f_Q]-\E_Q[f_Q]=\TV(P_t,Q).
\]
\end{proposition}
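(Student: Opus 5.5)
The plan is a direct computation with the explicit indicator test. Since $\X$ is finite, both expectations are finite sums, so we can write
\[
\E_{P_t}[f_Q]-\E_Q[f_Q]=\sum_{x\in\X}\one_{A_Q}(x)\bigl(P_t(x)-Q(x)\bigr)=\sum_{x\in A_Q}\bigl(P_t(x)-Q(x)\bigr).
\]
The remaining task is to identify this sum with $\TV(P_t,Q)=\tfrac12\sum_x|P_t(x)-Q(x)|$.

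For that identification, split $\X$ into $A_Q$ and its complement $A_Q^c=\{x:P_t(x)<Q(x)\}$. Because both $P_t$ and $Q$ are probability distributions, $\sum_x(P_t(x)-Q(x))=0$. This gives
\[
\sum_{x\in A_Q}(P_t(x)-Q(x))=\sum_{x\in A_Q^c}(Q(x)-P_t(x)).
\]
On $A_Q$ the summands equal $|P_t(x)-Q(x)|$, and on $A_Q^c$ the summands $Q(x)-P_t(x)$ also equal $|P_t(x)-Q(x)|$. Therefore the sum of absolute differences is twice the $A_Q$ sum, and halving it yields exactly $\TV(P_t,Q)$.

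An alternative route uses Lemma~\ref{lem:tv-variational}. The bound $\E_{P_t}[f_Q]-\E_Q[f_Q]\le\TV(P_t,Q)$ is immediate because $f_Q\in[0,1]^\X$. The upper-bound chain in that lemma's proof is already attained by the set $\{x:P(x)\ge Q(x)\}$, which here is $A_Q$. I would still present the self-contained computation above, since it is only two lines.

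There is no real obstacle. The one point needing care is the tie set $\{P_t=Q\}$: it contributes zero to every sum, so placing it inside $A_Q$ (the $\ge$ convention) is harmless.
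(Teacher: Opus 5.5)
Your proof is correct and follows essentially the same route as the paper: both write $\E_{P_t}[f_Q]-\E_Q[f_Q]$ as $\sum_{x\in A_Q}(P_t(x)-Q(x))$ and identify this positive mass with $\TV(P_t,Q)$. Your argument that $\sum_x(P_t(x)-Q(x))=0$ forces each of the two parts to carry half of $\sum_x|P_t(x)-Q(x)|$ makes explicit the step the paper only asserts.
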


\begin{proof}
By definition of $A_Q$,
\[
\E_{P_t}[f_Q]-\E_Q[f_Q]
=P_t(A_Q)-Q(A_Q)
=\sum_{x\in A_Q}(P_t(x)-Q(x)).
\]
Because $P_t(x)-Q(x)\ge0$ exactly on $A_Q$ and the total positive mass equals the total variation distance, the claim follows.
\end{proof}

\subsection{Full proof of the attackability-radius law}

\begin{proof}[Proof of \cref{thm:attackability-law}]
For every verifier $f_t$,
\[
\begin{aligned}
\Gap_{s,t}(f_t)
&=\Acc_t(f_t)-W_{s\to t}(f_t;\Gamma)\\
&=\E_{P_t}[f_t]-\sup_{Q\in\Qset_s}\E_Q[f_t]\\
&=\inf_{Q\in\Qset_s}\left(\E_{P_t}[f_t]-\E_Q[f_t]\right).
\end{aligned}
\]
Let
\[
L(f_t,Q):=\E_{P_t}[f_t]-\E_Q[f_t].
\]
By \cref{lem:qset-convex}, $\Qset_s$ is nonempty, convex, and compact. Since $\X$ is finite, $[0,1]^\X$ is also convex and compact. The payoff $L$ is affine in $f_t$ and affine in $Q$. Sion's minimax theorem therefore applies:
\[
    \sup_{f_t}\inf_{Q\in\Qset_s}L(f_t,Q)
    =
    \inf_{Q\in\Qset_s}\sup_{f_t}L(f_t,Q).
\]
For any fixed attacked distribution $Q$, \cref{lem:tv-variational} gives
\[
    \sup_{f_t:\X\to[0,1]}L(f_t,Q)=\TV(P_t,Q).
\]
Combining these identities yields
\[
    \sup_{f_t:\X\to[0,1]}\Gap_{s,t}(f_t)
    =
    \inf_{Q\in\Qset_s}\TV(P_t,Q)
    =
    \AttackRad_{s,t}.
\]
\end{proof}

\subsection{Coarsening and directional asymmetry}

\begin{proof}[Proof of Proposition~\ref{prop:coarsening}]
By compactness of $\Qset_s$, for each $j$ there exists
\[
Q_j^\star\in\argmin_{Q\in\Qset_s}\TV(P_j,Q)
\]
with $\TV(P_j,Q_j^\star)=\AttackRad_{s,j}$. Define $\overline Q:=\sum_j\pi_jQ_j^\star$. By convexity, $\overline Q\in\Qset_s$. Joint convexity of total variation gives
\[
\TV(P_\AI,\overline Q)
=
\TV\Big(\sum_j\pi_jP_j,\sum_j\pi_jQ_j^\star\Big)
\le
\sum_j\pi_j\TV(P_j,Q_j^\star)
=
\sum_j\pi_j\AttackRad_{s,j}.
\]
Since $\AttackRad_{s,\AI}=\inf_{Q\in\Qset_s}\TV(P_\AI,Q)\le\TV(P_\AI,\overline Q)$, the result follows.
\end{proof}

\begin{proposition}[No universal ordering between human evasion and model impersonation]
\label{prop:no-ordering}
There is no universal inequality relating $\AttackRad_{s,\Human}$ and $\AttackRad_{s,t}$ for a specific model target $t$. Both strict inequalities can occur.
\end{proposition}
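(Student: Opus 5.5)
The plan is to exhibit two explicit finite examples, one for each strict inequality. Both use a tiny image space and the identity edit class $\Gamma(x)=\{x\}$, so that $\Qset_s=\{P_s\}$ and the radii reduce to plain total-variation distances: $\AttackRad_{s,\Human}=\TV(P_\Human,P_s)$ and $\AttackRad_{s,t}=\TV(P_t,P_s)$. Computing such distances is elementary arithmetic. Since \cref{thm:attackability-law} identifies each radius with an optimal robust gap, the examples also give gap statements directly. The source $s$ should be some generator (say $s=1$), and the specific target $t$ a different generator.

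\emph{Example A} gives $\AttackRad_{s,\Human}<\AttackRad_{s,t}$. Take $\X=\{a,b\}$ with $P_s=\delta_a$, $P_\Human=\delta_a$ and $P_t=\delta_b$. Then $\AttackRad_{s,\Human}=0$ while $\AttackRad_{s,t}=1$: the human distribution happens to coincide with the source generator, whereas the other generator is far from it.

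\emph{Example B} gives the reverse. Swap the roles, taking $P_\Human=\delta_b$ and $P_t=\delta_a$ with $P_s=\delta_a$. This yields $\AttackRad_{s,\Human}=1>0=\AttackRad_{s,t}$.

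If one prefers nontrivial edits, a single example can produce both orderings by changing only $\Gamma$. Take $\X=\{a,b,c\}$ with $P_s=\delta_a$. If $\Gamma(a)=\{a,b\}$, then every distribution supported on $\{a,b\}$ is reachable. Setting $P_\Human=\delta_b$ and $P_t=\delta_c$ gives radii $0$ and $1$, and reversing the placement of $b$ and $c$ gives the other order. I would present the identity-edit version for transparency, and mention that the same phenomenon persists when $\Gamma$ is nontrivial.

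The only potential obstacle is well-posedness. The labels must be genuinely distinct: a human label and a generator label, with $P_s$ allowed to coincide with one of them only as a distribution, which the framework permits. If one insists on distinct distributions, I would perturb, for example $P_\Human=(1-\eta)\delta_a+\eta\delta_b$. Then $\TV(P_\Human,P_s)=\eta$, which stays below $\AttackRad_{s,t}=1$ for small $\eta$, so both strict inequalities survive.
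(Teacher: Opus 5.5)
Your proposal is correct and follows essentially the paper's route of explicit point-mass counterexamples on a tiny space; your nontrivial-edit variant ($\X=\{a,b,c\}$, $P_s=\delta_a$, $\Gamma(a)=\{a,b\}$) is exactly the paper's setup. The paper's second example coincides with your reversed case, and its first uses $P_\Human=\tfrac12\delta_b+\tfrac12\delta_c$ to get $\tfrac12<1$ instead of your $0<1$. The paper leads with this version rather than identity edits, so $P_s$, $P_\Human$, $P_t$ are automatically distinct and the radii come from genuine editing, which makes your perturbation step unnecessary.
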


\begin{proof}
We give two finite-space examples. Let $\X=\{a,b,c\}$, $P_s=\delta_a$, and $\Gamma(a)=\{a,b\}$.

First let $P_\Human=\tfrac12\delta_b+\tfrac12\delta_c$ and $P_t=\delta_c$. Since every reachable attacked distribution is supported on $\{a,b\}$, choosing $Q=\delta_b$ gives $\AttackRad_{s,\Human}=1/2$, while every reachable $Q$ is disjoint from $\delta_c$, so $\AttackRad_{s,t}=1$.

Second let $P_\Human=\delta_c$ and $P_t=\delta_b$. Then choosing $Q=\delta_b$ gives $\AttackRad_{s,t}=0$, while every reachable $Q$ is disjoint from $\delta_c$, so $\AttackRad_{s,\Human}=1$.
\end{proof}

\section{Additional Formal Details for Barrier II}
\label{app:proofs-barrier2}

\subsection{Proof of the surrogate-attack theorem}

\begin{proof}[Proof of \cref{thm:surrogate-attack}]
Fix $x\in\X$ and let
\[
z_x^\star\in\argmax_{z\in\Gamma(x)}f_t(z).
\]
Since $\widehat f_t$ is an $\varepsilon$-emulator on $\U$ and $z_x^\star\in\Gamma(x)\subseteq\U$,
\[
\widehat f_t(z_x^\star)
\ge f_t(z_x^\star)-\varepsilon
=
\sup_{z\in\Gamma(x)}f_t(z)-\varepsilon.
\]
By definition of $\alpha(x)$,
\[
\widehat f_t(\widehat a(x))
\ge
\sup_{z\in\Gamma(x)}\widehat f_t(z)-\alpha(x)
\ge
\widehat f_t(z_x^\star)-\alpha(x).
\]
Applying the emulator bound again at $\widehat a(x)$ gives
\[
\begin{aligned}
f_t(\widehat a(x))
&\ge \widehat f_t(\widehat a(x))-\varepsilon\\
&\ge \sup_{z\in\Gamma(x)}f_t(z)-2\varepsilon-\alpha(x).
\end{aligned}
\]
Averaging over $X\sim P_s$ and using Proposition~\ref{prop:white-box-pointwise},
\[
\begin{aligned}
\E_{X\sim P_s}[f_t(\widehat a(X))]
&\ge
\E_{X\sim P_s}\left[\sup_{z\in\Gamma(X)}f_t(z)\right]
-2\varepsilon-\bar\alpha\\
&=W_{s\to t}(f_t;\Gamma)-2\varepsilon-\bar\alpha.
\end{aligned}
\]
\end{proof}

\begin{proof}[Proof of \cref{cor:learned-emulator}]
On the event $\{\|\widehat f_t-f_t\|_{\infty,\U}\le\varepsilon\}$, \cref{thm:surrogate-attack} applies. This event has probability at least $1-\delta$ by assumption.
\end{proof}

\subsection{Exact recovery of score-revealing heads}

\begin{proof}[Proof of Proposition~\ref{prop:score-recovery}]
For the softmax head, fix the gauge $h_m=0$. For each probe point $x_i$ and each class $j<m$,
\[
\log\frac{g_H(j\mid x_i)}{g_H(m\mid x_i)}
=
\langle h_j,\phi(x_i)\rangle-\langle h_m,\phi(x_i)\rangle
=
\langle h_j,\phi(x_i)\rangle.
\]
Stacking the $p$ equations for a fixed class $j$ gives
\[
\Phi h_j=
\begin{bmatrix}
\log \frac{g_H(j\mid x_1)}{g_H(m\mid x_1)}\\
\vdots\\
\log \frac{g_H(j\mid x_p)}{g_H(m\mid x_p)}
\end{bmatrix}.
\]
Since $\Phi$ is invertible, each $h_j$ is exactly determined. The gauge determines $h_m=0$, so the whole head $H$ is recovered. The binary logistic case is obtained by taking $m=2$ and converting returned probabilities into logits.

Exact recovery gives an exact emulator of every target coordinate $f_t(x)=g_H(t\mid x)$. Applying \cref{thm:surrogate-attack} with $\varepsilon=0$ shows that attacks on the recovered head match white-box attacks up to optimization error.
\end{proof}

\begin{proof}[Proof of Proposition~\ref{prop:stable-recovery}]
For each class $j<m$, let $e_j\in\R^p$ be the vector of log-odds errors at the probe points. By assumption, $\|e_j\|_2\le\sqrt{p}\,\eta$. The recovered vector satisfies
\[
\Phi\widehat h_j=\Phi h_j+e_j,
\]
so
\[
\widehat h_j-h_j=\Phi^{-1}e_j.
\]
Therefore
\[
\|\widehat h_j-h_j\|_2
\le
\|\Phi^{-1}\|_2\|e_j\|_2
\le
\|\Phi^{-1}\|_2\sqrt{p}\,\eta.
\]
For any $u\in\U$ with $\|\phi(u)\|_2\le R$,
\[
|\langle \widehat h_j-h_j,\phi(u)\rangle|
\le
\|\widehat h_j-h_j\|_2\|\phi(u)\|_2
\le
R\|\Phi^{-1}\|_2\sqrt{p}\,\eta.
\]
Let $\ell(u)$ and $\widehat\ell(u)$ denote the true and recovered logit vectors in the same gauge, with $\ell_m(u)=\widehat\ell_m(u)=0$. The preceding per-class bound gives $\|\widehat\ell(u)-\ell(u)\|_\infty\le B$ uniformly on $\U$. For a softmax target coordinate $s_t(\ell)$,
\[
\|\nabla s_t(\ell)\|_1=2s_t(\ell)(1-s_t(\ell))\le\tfrac12.
\]
The mean-value bound therefore gives $\|\widehat f_t-f_t\|_{\infty,\U}\le B/2$. In the binary reference-class gauge, the target probability is a logistic function of one log-odds value; its derivative is at most $1/4$, giving the sharper bound $B/4$.
\end{proof}

\section{Private-Interface Counting Proofs}
\label{app:proofs-private}

\begin{proof}[Proof of \cref{thm:hidden-state}]
Condition on the attacker's internal randomness, yielding a deterministic adaptive strategy. There are at most $M^q$ possible transcripts. Each transcript determines one final action $a$. By assumption, the set of hidden states on which any fixed action succeeds has size at most $\rho|\Theta|$. If the attacker succeeds, the realized hidden state must lie in the success set of the action corresponding to the realized transcript. Thus the success set is contained in a union of at most $M^q$ sets, each of size at most $\rho|\Theta|$. The success probability is therefore at most $M^q\rho$. Averaging over the attacker's randomness preserves the bound.
\end{proof}

\subsection{Hidden-support instantiation}

Fix integers $d,k,b,\tau$ with $1\le\tau\le b\le d$ and $\tau\le k\le d$. A verifier secretly samples a support $S\subseteq[d]$, $|S|=k$, uniformly from all $k$-subsets. A candidate forged sample is represented by a set $B\subseteq[d]$ of activated coordinates with $|B|\le b$. The verifier accepts if and only if $|B\cap S|\ge\tau$.

For a fixed set $B$ with $|B|=b$, define
\[
N_{d,k,b,\tau}:=
\sum_{i=\tau}^{\min\{k,b\}}\binom{b}{i}\binom{d-b}{k-i}.
\]
Then
\[
\Pr[|S\cap B|\ge\tau]
=
\frac{N_{d,k,b,\tau}}{\binom{d}{k}},
\]
because achieving intersection size $i$ requires choosing $i$ elements from $B$ and $k-i$ from its complement. Applying \cref{thm:hidden-state} with $M=2$ and
\[
\rho=\frac{N_{d,k,b,\tau}}{\binom{d}{k}}
\]
gives \cref{cor:hidden-support}. If $|B|<b$, enlarge it to any superset of size $b$, which can only increase its acceptance probability.

For the simpler upper bound, the event $|S\cap B|\ge\tau$ implies that some $\tau$-subset $T\subseteq B$ lies inside $S$. By a union bound,
\[
\Pr[|S\cap B|\ge\tau]
\le
\sum_{\substack{T\subseteq B\\ |T|=\tau}}\Pr[T\subseteq S].
\]
For each fixed $T$,
\[
\Pr[T\subseteq S]
=
\frac{\binom{d-\tau}{k-\tau}}{\binom{d}{k}}
=\prod_{r=0}^{\tau-1}\frac{k-r}{d-r}
\le
\left(\frac{k}{d}\right)^\tau.
\]
Hence
\[
\Pr[|S\cap B|\ge\tau]
\le
\binom{b}{\tau}\left(\frac{k}{d}\right)^\tau.
\]
Multiplying by $2^q$ gives the $q$-query bound.

\begin{corollary}[Explicit sparse-support query lower bound]
\label{cor:sparse-query-lb}
Assume $ebk/(\tau d)<1$. Any binary-query attacker that succeeds with probability at least $\eta$ in the hidden-support threshold model must make at least
\[
q\ge \tau\log_2\left(\frac{\tau d}{ebk}\right)-\log_2\left(\frac{1}{\eta}\right)
\]
queries.
\end{corollary}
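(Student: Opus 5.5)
Starting point: the simplified bound in \cref{cor:hidden-support}, which is already proved,
\[
\Pr[\mathrm{success}]\le 2^q\binom{b}{\tau}\left(\frac{k}{d}\right)^\tau .
\]
The plan is to weaken the binomial coefficient into a closed form and then solve for $q$. Let the attacker succeed with probability at least $\eta$. Then
\[
\eta\le 2^q\binom{b}{\tau}\left(\frac{k}{d}\right)^\tau .
\]
Taking $\log_2$ of both sides gives
\[
q\ge \log_2\eta-\log_2\binom{b}{\tau}-\tau\log_2\frac{k}{d} .
\]

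For the binomial coefficient I will use the standard estimate $\binom{b}{\tau}\le (eb/\tau)^\tau$. It follows from $\binom{b}{\tau}\le b^\tau/\tau!$ together with $\tau!\ge(\tau/e)^\tau$, and the latter comes from $e^\tau=\sum_j \tau^j/j!\ge \tau^\tau/\tau!$. Substituting, the per-action acceptance bound becomes
\[
\left(\frac{ebk}{\tau d}\right)^\tau .
\]
Hence
\[
\eta\le 2^q\left(\frac{ebk}{\tau d}\right)^\tau .
\]
Rearranging, this reads
\[
q\ge \tau\log_2\frac{\tau d}{ebk}-\log_2\frac1\eta ,
\]
which is exactly the claimed inequality.

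The hypothesis $ebk/(\tau d)<1$ only ensures the leading term is positive, so the bound is nontrivial. The inequality itself holds regardless. There is no serious obstacle here. The one point to check is that the parameter ranges assumed in the hidden-support instantiation ($1\le\tau\le b$) make $\binom{b}{\tau}$ and the Stirling-type bound valid. I would state that check briefly and note that randomized attackers are already covered by \cref{thm:hidden-state}.
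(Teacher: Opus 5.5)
Your proposal is correct and follows essentially the same approach as the paper's proof: start from the simplified bound $\Pr[\mathrm{success}]\le 2^q\binom{b}{\tau}(k/d)^\tau$ in \cref{cor:hidden-support}, weaken it with $\binom{b}{\tau}\le(eb/\tau)^\tau$, and rearrange. Your justification of the binomial estimate is accurate, and so is your remark that the hypothesis $ebk/(\tau d)<1$ only makes the bound nontrivial (with $\eta>0$ implicitly needed for the logarithm).
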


\begin{proof}
If success probability is at least $\eta$, then the preceding bound gives
\[
\eta\le 2^q\left(\frac{ebk}{\tau d}\right)^\tau,
\]
using $\binom{b}{\tau}\le(eb/\tau)^\tau$. Rearranging proves the claim.
\end{proof}

\section{Additional Empirical Results}
\label{app:exp-details}

This appendix is the empirical audit trail for \cref{sec:exp}. It is organized around the three empirical claims made in the main text rather than as a collection of independent tables. First, \cref{subsec:app-controlled-public} verifies the public-head recovery mechanism used in \cref{subsec:exp-public-collapse}: score access lets the attacker identify the deployed head and then attack it nearly as a white-box model. Second, \cref{subsec:app-expanded-interface} supports \cref{subsec:exp-interface}: changing the released interface changes attackability more than changing clean accuracy alone. Third, \cref{subsec:app-retrain-controls} supports \cref{subsec:exp-retrain-control}: stronger clean training and adversarial retraining retain negative estimated gap upper bounds in the evaluated public setting.

Unless stated otherwise, interface results are reported at the representative $8/255$ slice used in the main text, and $\widetilde\Gap$ is the empirical upper bound defined in \cref{sec:exp}. A table is included only when it checks a specific link in the main argument: recovery quality, directional collapse, clean operating point, interface bandwidth, transcript caps, score-guided random search, or retraining controls.

\subsection{Controlled public CLIP-softmax baseline}
\label{subsec:app-controlled-public}

The controlled three-class experiment in \cref{subsec:exp-public-collapse} uses a public CLIP-softmax verifier. The purpose of \cref{tab:d1-recovery-appendix} is to connect Proposition~\ref{prop:score-recovery} to the measured attack pipeline. The first numerical column tracks score emulation error, the second tracks the gap between attacks optimized on the recovered model and direct white-box attacks, and the final column checks whether the recovered model is already sufficient for targeted success. Thus the table is not merely a query sweep: it shows the transition from approximate recovery to operationally white-box behavior.

\begin{table}[H]
\centering
\caption{Exact-recovery details for the controlled public CLIP-softmax baseline on the three-class benchmark. The table supports \cref{subsec:exp-public-collapse} by showing that score emulation becomes operationally white-box by 2{,}048 queries.}
\label{tab:d1-recovery-appendix}
\small
\begin{tabular}{@{}rrrr@{}}
\toprule
\textbf{Queries} & \textbf{Score error} & \textbf{Rec.-WB gap} & \textbf{Recovered ASR} \\
\midrule
128 & $3.03\times 10^{-3}$ & 0.0363 & 1.000 \\
256 & $1.89\times 10^{-3}$ & 0.0192 & 1.000 \\
512 & $1.01\times 10^{-3}$ & 0.0162 & 1.000 \\
1{,}024 & $1.26\times 10^{-3}$ & 0.0052 & 1.000 \\
2{,}048 & $6.88\times 10^{-8}$ & 0.0030 & 1.000 \\
7{,}200 & $5.60\times 10^{-8}$ & 0.0028 & 1.000 \\
\bottomrule
\end{tabular}
\end{table}

Once the public head is recovered, the next question is whether collapse is confined to one source--target direction. \Cref{tab:d1-attack-appendix} answers that question. It expands the aggregate statement in \cref{fig:public-collapse}: human false-flagging, AI evasion, and generator impersonation all reach ASR $1.0$ by the largest evaluated budget, and every reported $\widetilde\Gap$ is negative. The last two columns make the negative-gap interpretation explicit on the held-out samples: attacked source images receive higher average target score than the clean target images themselves.

\begin{table}[H]
\centering
\caption{Per-direction targeted attacks for the original controlled public CLIP-softmax verifier on the three-class benchmark. This table supports the negative-gap claim in \cref{subsec:exp-public-collapse}: all four source--target directions reach ASR $1.0$ by $8/255$, and every empirical gap upper bound is negative.}
\label{tab:d1-attack-appendix}
\normalsize
\setlength{\tabcolsep}{5pt}
\begin{tabular}{@{}llrrrr@{}}
\toprule
\textbf{Direction} & \textbf{Budget} & \textbf{ASR} & $\boldsymbol{\widetilde\Gap}$ & \shortstack[c]{\textbf{Clean target}\\\textbf{acc.}} & \shortstack[c]{\textbf{Adv. target}\\\textbf{score}} \\
\midrule
Human$\to$AI & $2/255$ & 1.0000 & $-0.0387$ & 0.6786 & 0.7173 \\
Human$\to$AI & $4/255$ & 1.0000 & $-0.0522$ & 0.6786 & 0.7308 \\
Human$\to$AI & $8/255$ & 1.0000 & $-0.0628$ & 0.6786 & 0.7414 \\
AI$\to$Human & $2/255$ & 1.0000 & $-0.0420$ & 0.3338 & 0.3758 \\
AI$\to$Human & $4/255$ & 1.0000 & $-0.0592$ & 0.3338 & 0.3930 \\
AI$\to$Human & $8/255$ & 1.0000 & $-0.0755$ & 0.3338 & 0.4094 \\
SD1.5$\to$SDXL & $2/255$ & 1.0000 & $-0.0420$ & 0.3590 & 0.4010 \\
SD1.5$\to$SDXL & $4/255$ & 1.0000 & $-0.0636$ & 0.3590 & 0.4226 \\
SD1.5$\to$SDXL & $8/255$ & 1.0000 & $-0.0805$ & 0.3590 & 0.4395 \\
SDXL$\to$SD1.5 & $2/255$ & 0.8900 & $-0.0260$ & 0.3345 & 0.3605 \\
SDXL$\to$SD1.5 & $4/255$ & 0.9800 & $-0.0394$ & 0.3345 & 0.3739 \\
SDXL$\to$SD1.5 & $8/255$ & 1.0000 & $-0.0540$ & 0.3345 & 0.3885 \\
\bottomrule
\end{tabular}
\end{table}

The final check for the controlled public model is whether coarsening to a binary AI target avoids the attribution-style impersonation failure discussed after Proposition~\ref{prop:coarsening}. It does not in this setting. At $8/255$, all evaluated human$\to$AI mixture attacks---both oracle-mixture and coarse-only, with mixture weight $\pi\in\{0,0.25,0.5,0.75,1\}$---finish with ASR $1.0$. This observation is used only as an operational check; the formal coarsening statement remains the distributional inequality in \cref{prop:coarsening}.

\subsection{Expanded five-class interface study}
\label{subsec:app-expanded-interface}

The expanded benchmark adds more generators and interface variants. \Cref{tab:clean-matrix-appendix} first establishes the clean operating point. Its role is to prevent a misleading reading of \cref{tab:interface-main}: the selected linear, nonlinear, and off-the-shelf CLIP systems are not arbitrary weak detectors, while CNNSpot and DIRE are retained only as weak negative controls on this same-prompt benchmark. The main interface conclusions are therefore drawn from the three representative systems used in \cref{tab:interface-main}, not from the weak controls.

\begin{table}[H]
\centering
\caption{Clean matrix on the expanded five-class benchmark. This table supports the setup of \cref{subsec:exp-interface}: the three representative CLIP-based systems have meaningful clean performance, while CNNSpot and DIRE are included as weak off-the-shelf controls rather than as main robustness claims.}
\label{tab:clean-matrix-appendix}
\small
\setlength{\tabcolsep}{3pt}
\resizebox{\linewidth}{!}{%
\begin{tabular}{@{}lccccc l@{}}
\toprule
\textbf{Verifier} & \textbf{Acc.} & \textbf{Macro-F1} & \textbf{Fake AUC} & \textbf{TPR@5\%FPR} & \textbf{EER} & \textbf{Role} \\
\midrule
Binary logistic on CLIP & 0.8007 & 0.4479 & 0.8470 & 0.4033 & 0.2288 & Shared-representation linear baseline \\
Intercept softmax on CLIP & 0.5053 & 0.4782 & 0.9054 & 0.5217 & 0.1608 & Shared-representation linear variant \\
One-vs-rest logistic on CLIP & 0.5980 & 0.5871 & 0.9338 & 0.6433 & 0.1371 & Shared-representation linear variant \\
Temperature-scaled softmax on CLIP & 0.4827 & 0.4491 & 0.9286 & 0.7133 & 0.1525 & Shared-representation linear variant \\
Linear softmax head on CLIP & 0.5667 & 0.5734 & 0.9289 & 0.6608 & 0.1438 & Main-text linear representative \\
Two-layer MLP on CLIP & 0.6280 & 0.6071 & 0.9717 & 0.8333 & 0.0883 & Main-text nonlinear representative \\
CLIP-based off-the-shelf detector & 0.6773 & 0.6082 & 0.7231 & 0.2475 & 0.3367 & Main-text off-the-shelf representative \\
CNNSpot & 0.2013 & 0.1685 & 0.5108 & 0.0292 & 0.4804 & Weak off-the-shelf control \\
DIRE & 0.1900 & 0.1740 & 0.3090 & 0.0083 & 0.6554 & Weak off-the-shelf control \\
\bottomrule
\end{tabular}}
\end{table}

With the clean operating point fixed, \cref{tab:interface-appendix} is the central interface table behind \cref{tab:interface-main}. Each row changes what the attacker sees while holding the same benchmark, target direction family, attack budget, and representative verifiers fixed. The ASR columns show attack success, the $\widetilde\Gap$ columns estimate attack-derived upper bounds rather than certify positive robust separation, and the $\epsilon_{95}$ columns report validation-set emulator diagnostics. The important pattern is that high emulator error is not sufficient by itself: top-1 feedback has $\epsilon_{95}=1.0$ but can still be attacked, whereas binary feedback with abstention both removes the smooth target-aligned score and rejects borderline cases.

\begin{table}[H]
\centering
\caption{Representative interface slice at $8/255$ on the expanded five-class benchmark. This table is the detailed version of \cref{tab:interface-main}: columns report ASR, the empirical gap upper bound $\widetilde\Gap$, and the $95$th-percentile emulator error $\epsilon_{95}$. Here $q$ is a per-split transcript cap, yielding $\min(q,3000)$ training and $\min(q,1500)$ validation replies. Dashes denote model--interface combinations not evaluated in this sweep.}
\label{tab:interface-appendix}
\small
\setlength{\tabcolsep}{3pt}
\resizebox{\linewidth}{!}{%
\begin{tabular}{@{}lccc ccc ccc@{}}
\toprule
& \multicolumn{3}{c}{\textbf{Linear softmax on CLIP}} & \multicolumn{3}{c}{\textbf{Two-layer MLP on CLIP}} & \multicolumn{3}{c}{\textbf{Off-the-shelf CLIP detector}} \\
\cmidrule(lr){2-4}\cmidrule(lr){5-7}\cmidrule(l){8-10}
\textbf{Interface} & \textbf{ASR} & $\boldsymbol{\widetilde\Gap}$ & $\boldsymbol{\epsilon_{95}}$ & \textbf{ASR} & $\boldsymbol{\widetilde\Gap}$ & $\boldsymbol{\epsilon_{95}}$ & \textbf{ASR} & $\boldsymbol{\widetilde\Gap}$ & $\boldsymbol{\epsilon_{95}}$ \\
\midrule
Full score vector & 1.0000 & $-0.0356$ & 0.0051 & 1.0000 & $-0.0598$ & 0.0054 & 1.0000 & $-0.5515$ & 0.2824 \\
Single fake score & 0.9100 & $-0.0098$ & 0.3314 & 0.9300 & $-0.0200$ & 0.3137 & 1.0000 & $-0.5405$ & 0.3476 \\
8-bit fake score & 0.9233 & $-0.0100$ & 0.3196 & 0.9733 & $-0.0209$ & 0.3313 & -- & -- & -- \\
Top-1 label only & 0.7533 & $-0.0076$ & 1.0000 & 1.0000 & $-0.0380$ & 1.0000 & 1.0000 & $-0.5512$ & 1.0000 \\
Binary verdict & 1.0000 & $-0.0260$ & 0.7056 & 1.0000 & $-0.0498$ & 0.7004 & -- & -- & -- \\
Binary + abstain ($q=128$) & 0.0000 & $+0.0137$ & 1.0000 & 0.0000 & $+0.0104$ & 1.0000 & 0.0300 & $+0.2858$ & 1.0000 \\
Binary + abstain ($q=7{,}200$) & 0.0000 & $+0.0104$ & 1.0000 & 0.0000 & $+0.0120$ & 1.0000 & 0.0933 & $+0.2165$ & 1.0000 \\
\bottomrule
\end{tabular}}
\end{table}

\Cref{fig:interface-appendix} gives the same comparison as a visual diagnostic. It is included to make the three-way pattern easier to inspect: emulator error, ASR, and $\widetilde\Gap$ move differently across interfaces, so a single metric would obscure the reason that abstention helps.

\begin{figure}[H]
\centering
\IfFileExists{fig_interface_bandwidth.png}{%
\includegraphics[width=0.99\linewidth]{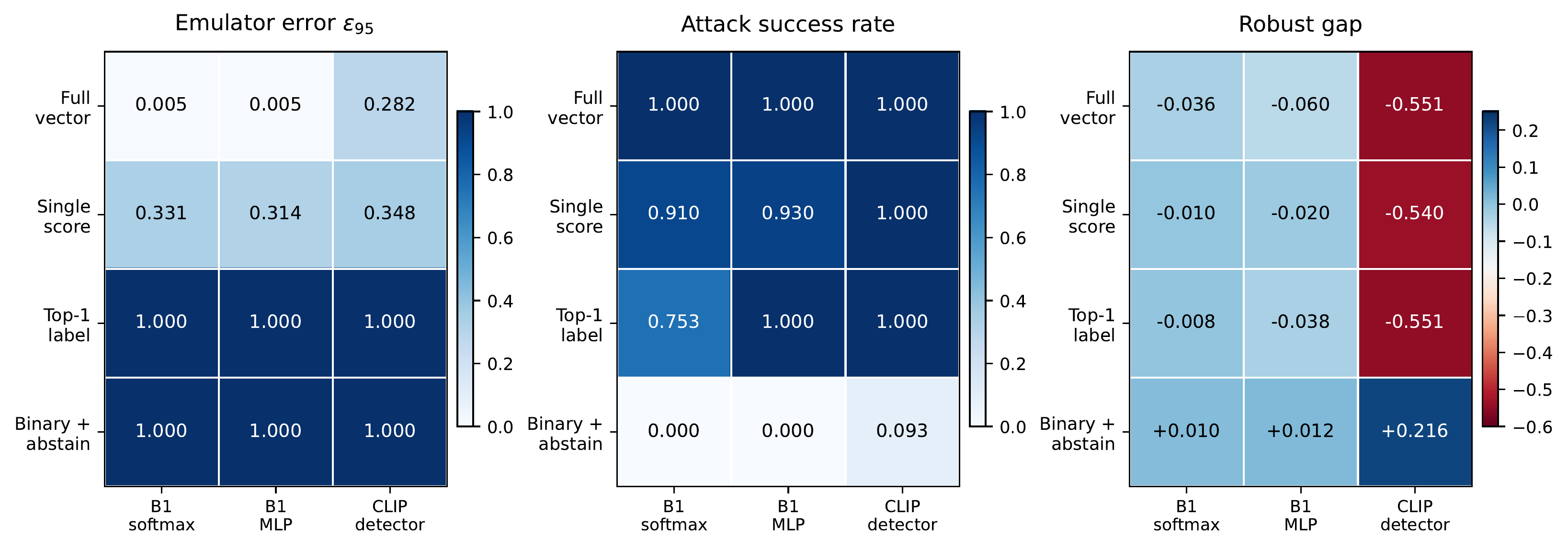}%
}{%
\fbox{\parbox{0.92\linewidth}{\centering Placeholder for \texttt{fig\_interface\_bandwidth.pdf}. The figure summarizes emulator error, ASR, and empirical gap upper bounds across interface bandwidths.}}%
}
\caption{Heatmap summary of the representative $8/255$ interface slice. This visual counterpart to \cref{tab:interface-appendix} shows that richer interfaces remain fragile and that positive empirical gap upper bounds first appear only under binary feedback with abstention.}
\Description{A three-panel heatmap of emulator error, ASR, and empirical gap upper bounds under different interfaces.}
\label{fig:interface-appendix}
\end{figure}

\Cref{tab:binary-query-appendix} varies the availability of binary transcripts without abstention. ASR remains $1.0$ and estimated gap upper bounds remain negative across the tested per-split caps. These are controlled transcript-availability measurements, not end-to-end commercial API query budgets. They show that ordinary binary feedback remains attackable under the tested caps.

\begin{table}[H]
\centering
\caption{Binary-feedback transcript-cap sweep for the linear softmax CLIP head on the expanded five-class benchmark. The cap is reset for each split; the actual reply counts are shown separately. Without abstention, ASR remains $1.0$ and estimated gap upper bounds remain negative.}
\label{tab:binary-query-appendix}
\small
\begin{tabular}{@{}rrrrrr@{}}
\toprule
\textbf{Cap $q$} & \textbf{Train replies} & \textbf{Val. replies} & \textbf{Binary ASR} & $\boldsymbol{\widetilde\Gap}$ & $\boldsymbol{\epsilon_{95}}$ \\
\midrule
128 & 128 & 128 & 1.0000 & $-0.0249$ & 0.6846 \\
512 & 512 & 512 & 1.0000 & $-0.0253$ & 0.7159 \\
2{,}048 & 2{,}048 & 1{,}500 & 1.0000 & $-0.0263$ & 0.7098 \\
7{,}200 & 3{,}000 & 1{,}500 & 1.0000 & $-0.0258$ & 0.7025 \\
\bottomrule
\end{tabular}
\end{table}

Finally, \cref{tab:private-oracle-appendix} reports a score-guided random-search sanity check. The search uses the deployed continuous target score to retain the best proposal and reports threshold-crossing success only; no $\widetilde\Gap$ was computed. It is not a purely binary private-oracle experiment and does not validate the hidden-state query bound in \cref{thm:hidden-state}.

\begin{table}[H]
\centering
\caption{Score-guided random-search sanity check. Entries are threshold-crossing success rates when continuous target scores guide proposal retention. This control reports neither a robust-gap estimate nor binary-only private-oracle performance.}
\label{tab:private-oracle-appendix}
\small
\begin{tabular}{@{}rcc@{}}
\toprule
\textbf{Queries} & \textbf{Linear softmax} & \textbf{Two-layer MLP} \\
\midrule
128 & 0.3438 & 0.1562 \\
512 & 0.4062 & 0.1875 \\
2{,}048 & 0.4062 & 0.2500 \\
7{,}200 & 0.4062 & 0.3125 \\
\bottomrule
\end{tabular}
\end{table}

\subsection{Adversarially trained and stronger clean three-class models}
\label{subsec:app-retrain-controls}

The last appendix block addresses the clean-strength objection raised in \cref{subsec:exp-retrain-control}. \Cref{tab:advd1-attack-appendix} gives the per-direction attack results for the adversarially trained three-class model. The table is needed because an aggregate ASR can hide whether one direction remains protected; here, every direction has negative estimated gap upper bounds under the evaluated budgets.

\begin{table}[H]
\centering
\caption{Per-direction targeted attacks for the adversarially trained three-class model. This table supports \cref{subsec:exp-retrain-control}: relative to the released checkpoint, clean performance improves substantially, but the model still fails under the same source--target attack directions.}
\label{tab:advd1-attack-appendix}
\normalsize
\setlength{\tabcolsep}{5pt}
\begin{tabular}{@{}llrrrr@{}}
\toprule
\textbf{Direction} & \textbf{Budget} & \textbf{ASR} & $\boldsymbol{\widetilde\Gap}$ & \shortstack[c]{\textbf{Clean target}\\\textbf{acc.}} & \shortstack[c]{\textbf{Adv. target}\\\textbf{score}} \\
\midrule
Human$\to$AI & $2/255$ & 1.0000 & $-0.0576$ & 0.6807 & 0.7383 \\
Human$\to$AI & $4/255$ & 1.0000 & $-0.0938$ & 0.6807 & 0.7745 \\
Human$\to$AI & $8/255$ & 1.0000 & $-0.1236$ & 0.6807 & 0.8042 \\
AI$\to$Human & $2/255$ & 0.9000 & $-0.0483$ & 0.3730 & 0.4213 \\
AI$\to$Human & $4/255$ & 0.9817 & $-0.0809$ & 0.3730 & 0.4539 \\
AI$\to$Human & $8/255$ & 1.0000 & $-0.1141$ & 0.3730 & 0.4872 \\
SD1.5$\to$SDXL & $2/255$ & 1.0000 & $-0.0694$ & 0.3899 & 0.4594 \\
SD1.5$\to$SDXL & $4/255$ & 1.0000 & $-0.1305$ & 0.3899 & 0.5204 \\
SD1.5$\to$SDXL & $8/255$ & 1.0000 & $-0.1847$ & 0.3899 & 0.5747 \\
SDXL$\to$SD1.5 & $2/255$ & 0.9267 & $-0.0238$ & 0.3623 & 0.3861 \\
SDXL$\to$SD1.5 & $4/255$ & 0.9933 & $-0.0421$ & 0.3623 & 0.4044 \\
SDXL$\to$SD1.5 & $8/255$ & 1.0000 & $-0.0615$ & 0.3623 & 0.4238 \\
\bottomrule
\end{tabular}
\end{table}

Across five seeds, full-batch retraining yields test accuracy $0.570\pm0.047$ and fake AUC $0.905\pm0.021$, whereas the stronger clean-only model reaches $0.870\pm0.003$ and $0.9823\pm0.0002$. The representative clean-only model used in \cref{subsec:exp-retrain-control} reaches clean accuracy $0.8689$ and fake AUC $0.9822$. Its targeted-attack evaluation still yields minimum ASR $0.9667$ at $2/255$, ASR $1.0$ for all four directions by $4/255$, and non-positive empirical gap upper bounds throughout. Together with \cref{tab:advd1-attack-appendix}, these controls support the main-text conclusion that clean-strength improvements alone do not remove the interface-learnability failure mode.

\section{Empirical Bracketing of Barrier I}
\label{app:sandwich}

This appendix connects the minimax statement in \cref{sec:barrier1} to the empirical quantities reported in \cref{sec:exp} and Appendix~\ref{app:exp-details}. The attackability radius itself is difficult to estimate for realistic images. The following inequalities describe a conceptual bracket; obtaining tight computable bounds for realistic images remains unresolved. The image experiments instead estimate attack-derived upper bounds for fixed verifiers, subject to the finite-sample interpretation in \cref{sec:exp}. By \cref{thm:attackability-law}, for any verifier family $\mathcal{F}\subseteq[0,1]^\X$,
\[
    \sup_{f\in\mathcal{F}}\Gap_{s,t}(f)
    \le
    \AttackRad_{s,t}.
\]
Thus the true robust gap of any fixed deployed verifier, evaluated against the full admissible attack class, lower-bounds the ceiling. An evaluated heuristic attack only supplies an upper bound on the fixed-verifier gap through $\widetilde\Gap$; it does not thereby lower-bound $\AttackRad_{s,t}$.

On the attack side, if $\widetilde{\Qset}_s\subseteq\Qset_s$ is any explicit parametric family of attacked source distributions, then
\[
    \AttackRad_{s,t}
    =
    \inf_{Q\in\Qset_s}\TV(P_t,Q)
    \le
    \inf_{Q\in\widetilde{\Qset}_s}\TV(P_t,Q).
\]
This is why explicit editors can upper-bound the statistical ceiling when their induced distributions can be compared to the target. Our image experiments do not estimate these bounds tightly enough to recover $\AttackRad_{s,t}$. Appendix~\ref{app:additional-finite-state} separately computes both minimax sides in a finite-state setting with known distributions.

\section{Extension Beyond Finite Image Spaces}
\label{app:continuous-route}

\Cref{sec:model} assumes a finite image space to keep the main proof elementary and to match quantized digital images. The attackability-radius law itself is not tied to finiteness: it extends to more general measurable spaces under compactness in total-variation norm. This section records that extension so that Barrier~I is understood as a distributional statement rather than an artifact of finite notation.

\begin{proposition}[Barrier I beyond finite image spaces]
\label{prop:tv-compact-extension}
Let $(\X,\mathcal{F})$ be a measurable space and let $\mathcal{P}(\X)$ denote its probability measures. Fix a target distribution $P_t\in\mathcal{P}(\X)$ and let $\Qset_s\subseteq\mathcal{P}(\X)$ be nonempty, convex, and compact in total-variation norm. Then
\[
\sup_{f:\X\to[0,1]\ \mathrm{measurable}}
\left(\E_{P_t}[f]-\sup_{Q\in\Qset_s}\E_Q[f]\right)
=
\inf_{Q\in\Qset_s}\TV(P_t,Q).
\]
\end{proposition}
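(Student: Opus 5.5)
The plan is to mirror the finite-space proof of \cref{thm:attackability-law}. The only change is that Sion's theorem is applied with compactness on the attacker's side, because the verifier set is no longer compact.

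\emph{Easy direction.} For any measurable $f:\X\to[0,1]$ and any $Q\in\Qset_s$,
\[
\E_{P_t}[f]-\sup_{Q'\in\Qset_s}\E_{Q'}[f]\le \E_{P_t}[f]-\E_Q[f]\le \TV(P_t,Q).
\]
The last inequality is the measurable analogue of \cref{lem:tv-variational}. To prove it, take a Hahn decomposition of the signed measure $P_t-Q$ with positive set $A$. Then $\int f\,d(P_t-Q)\le (P_t-Q)(A)=\TV(P_t,Q)$. Taking the infimum over $Q$ and then the supremum over $f$ gives ``$\le$''.

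\emph{Hard direction.} Write the left-hand side as $\sup_f\inf_{Q\in\Qset_s}L(f,Q)$ with $L(f,Q)=\int f\,dP_t-\int f\,dQ$, as in the finite proof.

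The verifier set $\mathcal{B}$ of measurable $[0,1]$-valued functions is convex but not compact in any useful topology. So I will use the version of Sion's theorem that needs compactness only on the minimizing side.

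Equip $\Qset_s$ with the total-variation norm topology, in which it is compact and convex by hypothesis. For each fixed $f$, the map $Q\mapsto L(f,Q)$ is affine. It is also $1$-Lipschitz in TV, since $|\int f\,d(Q-Q')|\le \TV(Q,Q')$ for $f\in[0,1]$ by the easy-direction bound applied to both orders. Hence it is continuous, in particular lower semicontinuous.

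For each fixed $Q$, the map $f\mapsto L(f,Q)$ is affine on the convex set $\mathcal{B}$. Sion's theorem needs only quasi-concavity in $f$ and no topology on $\mathcal{B}$ beyond what makes the sup side trivial. It therefore yields
\[
\sup_{f\in\mathcal{B}}\inf_{Q\in\Qset_s}L=\inf_{Q\in\Qset_s}\sup_{f\in\mathcal{B}}L.
\]
The inner supremum equals $\TV(P_t,Q)$: it is attained by $f=\one_A$ with $A$ the Hahn positive set, the analogue of \cref{prop:fixedQ-best-test}. This gives ``$\ge$''.

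\emph{Main obstacle.} The delicate step is justifying that Sion applies with no topology (equivalently, the discrete topology) on $\mathcal{B}$. Standard statements require upper semicontinuity in $f$, which holds trivially under the discrete topology; Sion's proof uses compactness only of the set over which the infimum is taken.

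If I want to avoid this subtlety, I will give a direct finite-intersection argument instead. Suppose the left side is below some $c<\inf_Q\TV(P_t,Q)$. Then for every $f$, the set $K_f=\{Q\in\Qset_s: L(f,Q)\le c\}$ is nonempty and closed in the compact set $\Qset_s$. The family $\{K_f\}$ has the finite intersection property: for finitely many $f_1,\dots,f_n$, apply the finite-dimensional minimax theorem to the simplex of mixtures of the $f_i$ against $\Qset_s$. Compactness then gives a $Q^\ast\in\bigcap_f K_f$. Taking $f=\one_A$ for the Hahn set of $P_t-Q^\ast$ shows $\TV(P_t,Q^\ast)\le c$, a contradiction.
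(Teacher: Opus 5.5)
Your proposal is correct and is essentially the paper's argument: Sion's theorem with compactness only on the $\Qset_s$ side, TV-continuity of $Q\mapsto L(f,Q)$, and the variational form of total variation for the inner supremum (you justify this via a Hahn decomposition, whereas the paper simply calls it standard). The ``main obstacle'' you flag disappears if you topologize the bounded measurable functions by the sup norm, as the paper does, since $f\mapsto L(f,Q)$ is then continuous; the discrete topology is not a vector topology, so that remark is slightly off, but affineness in $f$ (Kneser--Fan) or your rigorous finite-intersection fallback covers the step regardless.
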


\begin{proof}
Let $\mathcal{B}_b(\X)$ be the Banach space of bounded measurable real-valued functions on $\X$ with the sup norm, and let
\[
\mathcal{F}_{[0,1]}:=\{f\in\mathcal{B}_b(\X):0\le f\le1\}.
\]
Define
\[
\Lambda(Q,f):=\E_{P_t}[f]-\E_Q[f].
\]
The set $\mathcal{F}_{[0,1]}$ is convex, and $\Qset_s$ is convex and compact by assumption. For fixed $Q$, the map $f\mapsto\Lambda(Q,f)$ is affine and continuous in sup norm. For fixed $f$, the map $Q\mapsto\Lambda(Q,f)$ is affine and continuous in total-variation norm. Sion's minimax theorem therefore yields
\[
\sup_{f\in\mathcal{F}_{[0,1]}}\inf_{Q\in\Qset_s}\Lambda(Q,f)
=
\inf_{Q\in\Qset_s}\sup_{f\in\mathcal{F}_{[0,1]}}\Lambda(Q,f).
\]
For fixed $Q$, the inner supremum is the standard variational characterization of total variation on measurable spaces, so the result follows.
\end{proof}

\clearpage
\section{Reproducibility, Societal Impact, and AI-Assisted Writing}
\label{app:open-science}

The public source repository is
\url{https://github.com/kaikaiyao/pixels-alone-provenance}.
It contains core experiment code and configurations, including implementations of the finite-state, RGB ResNet-18, and paired image-quality evaluations in Appendix~\ref{app:additional}. Datasets, model weights, and saved experimental outputs are not bundled. Reproducing the image experiments requires the corresponding external data and model assets.

\paragraph{Compute resources.}
The original image experiments used NVIDIA A100 GPU workers. The main controlled studies use frozen CLIP features and small verifier heads rather than end-to-end training of large generative models. The exact-recovery, interface-sweep, binary-feedback, score-guided random-search, and retraining controls are dominated by feature extraction, small-head training, and attack/evaluation loops. The additional ResNet-18 victim is trained from RGB pixels, and the finite-state study solves linear programs. No reported experiment requires training a new image generator or a large vision-language backbone from scratch.

\paragraph{Experimental uncertainty.}
The image attack results are descriptive evaluations on held-out samples; their gap estimates are not accompanied by population confidence bounds. The original stochastic retraining controls report mean and standard deviation over five seeds. The finite-state plug-in study uses 20 repeats per setting and reports mean absolute error against known population values. The additional ResNet-18 evaluation uses one seed; its generality and training variability remain unresolved.

\paragraph{Existing assets and licenses.}
The experiments use existing assets credited in the paper and source bundle: MS COCO prompts/images (COCO annotations under CC-BY 4.0, with images subject to their original Flickr licenses and terms), OpenAI CLIP (MIT-licensed repository and public model-release terms), Stable Diffusion 1.5 (CreativeML OpenRAIL-M), Stable Diffusion 2.1, SDXL, and PixArt-family generators (CreativeML Open RAIL++-M/OpenRAIL++ model-card terms), and detector baselines from their cited authors. The source release contains code and configurations; datasets, image assets, model checkpoints, and saved experimental outputs are not redistributed. Users must obtain upstream assets from the original providers under their original terms.

\paragraph{New assets.}
The new assets are experiment source code and configuration files, including implementations of the additional evaluations in Appendix~\ref{app:additional}. Repository documentation describes the code and required external inputs; no new dataset or model checkpoint is released.

\paragraph{Human subjects and IRB.}
The work does not involve crowdsourcing, participant studies, intervention with human subjects, or collection of private user data. No IRB review was sought because the experiments use public prompts/images, synthetic generations, and offline research implementations.

This paper studies attacks on passive provenance systems and therefore has dual-use implications. Targeted-forgery techniques could be adapted to weaken real provenance or moderation pipelines. The analysis is defensive: it clarifies that clean provenance accuracy can be a misleading robustness signal and that public verifier interfaces can become a dominant source of fragility. All experiments were conducted offline on controlled benchmarks and research implementations rather than against live moderation systems or third-party services.

A generative AI assistant was used for language polishing and grammar editing during manuscript preparation, and coding assistants were used to help implement and monitor empirical experiments. The authors are responsible for verifying all code, results, claims, and citations prior to submission.

\clearpage
\section{Additional Protocols and Evaluations}
\label{app:additional}

\subsection{Surrogate training and attack protocol}
\label{app:additional-protocol}

The controlled benchmark contains aligned real, Stable Diffusion~1.5, and SDXL images for 1{,}200 MS COCO prompts, split by prompt into 600/300/300 training/validation/test prompts.
The expanded benchmark adds Stable Diffusion~2.1 and PixArt under the same split protocol.
Training and validation interface transcripts remain separated by prompt.
For exact softmax-head recovery, the attacker queries full probability vectors on public CLIP features, converts the replies to log-odds, and solves an overdetermined least-squares system for the relative head parameters.
For learned emulators, we use a two-layer model over public features with a 256-unit GELU hidden layer, trained for 30 full-batch epochs using AdamW with learning rate $10^{-3}$ and weight decay $10^{-4}$.
The loss follows the released response: mean-squared error for probability vectors or scalar scores, binary cross-entropy for binary responses, and cross-entropy for label-only or abstaining responses.

On the expanded benchmark, a transcript cap $q$ applies separately to each split, yielding $\min(q,3000)$ training replies and $\min(q,1500)$ validation replies.
Thus a cap of 7{,}200 yields 4{,}500 replies in total for this interface sweep; it is distinct from the 7{,}200 probes in the exact-recovery experiment.
These controlled interface wrappers are API-style abstractions that isolate the effect of released feedback.
They do not evaluate a live commercial service, whose preprocessing, model updates, account controls, and query limits may differ.

After fitting the emulator, the attacker differentiates through the public feature extractor and recovered or learned head to maximize the chosen target score.
The pixel attacks use targeted projected gradient descent (PGD) under $\ell_\infty$ budgets $\delta\in\{2/255,4/255,8/255\}$, with 40 steps, step size $\delta/4$, and five restarts.
The deployed verifier then scores each returned image for the reported ASR and empirical gap estimate.
The validation-set $95$th-percentile emulator error, $\epsilon_{95}$, measures held-out approximation error; it does not bound the uniform error $\varepsilon$ over the full attack region required by \cref{thm:surrogate-attack}.
The interface sweep also does not estimate the average surrogate optimization error $\bar\alpha$ separately, so its diagnostics cannot be substituted into the theorem's $2\varepsilon+\bar\alpha$ guarantee.

\subsection{Finite-state validation of Barrier I}
\label{app:additional-finite-state}

We evaluate the minimax identity on $\X=\{0,1\}^{6}$, where distributions and reachable edits can be represented explicitly.
The study uses five fixed source--target distribution pairs and Hamming edit sets $\Gamma_r(x)=\{z:d_H(x,z)\le r\}$ for $r\in\{0,1,2,3\}$, giving 20 population settings.
For each setting, an attack-side transport linear program and an independently formulated verifier-side linear program compute opposite sides of \cref{thm:attackability-law}.

The two programs can be written for any finite source and target mass functions $p_s$ and $p_t$ as follows.
Let $\pi_{xz}$ be the probability mass transported from source state $x$ to output state $z$, and let $q_z=\sum_x\pi_{xz}$.
The attack-side program is
\begin{equation}
\begin{aligned}
\min_{\pi,q,v}\quad & \frac{1}{2}\sum_{z\in\X}v_z \\
\text{subject to}\quad
& \pi_{xz}\ge0,\qquad \sum_{z\in\X}\pi_{xz}=p_s(x), \\
& \pi_{xz}=0\quad\text{if }z\notin\Gamma_r(x),\qquad q_z=\sum_{x\in\X}\pi_{xz}, \\
& v_z\ge p_t(z)-q_z,\qquad v_z\ge q_z-p_t(z).
\end{aligned}
\label{app:additional-transport-lp}
\end{equation}
Its optimum is $\inf_{Q\in\Qset_s}\TV(P_t,Q)$.
The verifier-side program introduces $u_x$ to upper-bound the largest admissible target score for source state $x$:
\begin{equation}
\begin{aligned}
\max_{f,u}\quad & \sum_{z\in\X}p_t(z)f_z-\sum_{x\in\X}p_s(x)u_x \\
\text{subject to}\quad
& 0\le f_z\le1,\qquad 0\le u_x\le1, \\
& u_x\ge f_z\quad\text{for every }z\in\Gamma_r(x).
\end{aligned}
\label{app:additional-verifier-lp}
\end{equation}
Its optimum is $\sup_f\Gap_{s,t}(f)$ because an optimal $u_x$ can be chosen as the largest score among the admissible edits.

Across the 20 population settings, the maximum absolute discrepancy between the two independently computed values is $4.44\times10^{-16}$.
We also replace the source and target distributions with empirical histograms and compare the resulting plug-in radius with the known population value, using 20 repeats per setting.
\Cref{tab:finite-state} reports the mean absolute error across these settings and repeats.
The decreasing error checks estimation in this finite domain; it does not estimate $\AttackRad_{s,t}$ for real high-dimensional image distributions.

\begin{table}[htbp]
\centering
\caption{Finite-state plug-in estimation of the attackability radius. Mean absolute error is aggregated over 20 source--target/edit-radius settings and 20 repeats per setting. The population attack-side and verifier-side optima agree to a maximum absolute discrepancy of $4.44\times10^{-16}$.}
\label{tab:finite-state}
\begin{tabular}{rr}
\toprule
\textbf{Sample size $n$} & \textbf{Mean absolute error} \\
\midrule
250 & 0.02094 \\
1{,}000 & 0.01008 \\
5{,}000 & 0.00451 \\
\bottomrule
\end{tabular}
\end{table}

\subsection{Transfer to a non-CLIP victim}
\label{app:additional-resnet}

To test whether transfer requires CLIP features in the victim, we evaluate an ImageNet-pretrained ResNet-18~\cite{he2016deep} trained end-to-end on RGB pixels and fixed during attack evaluation.
It reaches validation/test fake AUCs of 0.9830/0.9862 and test balanced accuracy of 93.58\% at the validation-selected threshold of 0.8480.
The victim returns one continuous fake-class score per query.
The transcript contains 9{,}000 scalar replies: a clean image and one uniformly perturbed $\ell_\infty\le8/255$ variant for each of 3{,}000 training and 1{,}500 validation images.
The attacker fits a two-layer emulator over frozen CLIP ViT-B/32 features and performs targeted PGD at $8/255$, with 40 steps and five restarts, without adaptive victim queries during attack optimization.

\Cref{tab:resnet-transfer} separates overall ASR from success among sources classified correctly before attack.
In the fake-to-real direction, 45 of 117 initially correct sources become accepted as real.
In the reverse direction, none of the 122 initially correct sources succeeds; the sole overall success was already classified as the target before attack.
The positive empirical gap estimates leave the sign of the true robust gap unresolved.
These results support partial directional transfer across victim and emulator representations, under one victim architecture, one seed, the same-prompt benchmark, and the evaluated bounded pixel edits.

\begin{table}[htbp]
\centering
\small
\caption{Transfer from a CLIP-feature emulator to an RGB ResNet-18 victim at $\ell_\infty=8/255$. Each direction contains 128 held-out sources. The initially-correct column excludes clean misclassifications and reports the number of induced target predictions. Positive empirical gap estimates do not certify positive robust separation.}
\label{tab:resnet-transfer}
\begin{tabular}{lrrr}
\toprule
\textbf{Direction} & \textbf{Overall ASR} & \shortstack{\textbf{Initially-correct}\\\textbf{ASR (count)}} & $\boldsymbol{\widetilde\Gap}$ \\
\midrule
Fake $\to$ Real & 43.75\% & 38.46\% (45/117) & $+0.5375$ \\
Real $\to$ Fake & 0.78\% & 0.00\% (0/122) & $+0.9093$ \\
\bottomrule
\end{tabular}
\end{table}

\subsection{Paired image quality and attack success}
\label{app:additional-quality}

We measure structural similarity (SSIM)~\cite{wang2004ssim}, learned perceptual image patch similarity with an AlexNet backbone (LPIPS-Alex)~\cite{zhang2018unreasonable}, and peak signal-to-noise ratio (PSNR) between clean and attacked images.
Higher SSIM and PSNR, and lower LPIPS, indicate closer agreement under the respective paired metrics.
This evaluation uses 16 held-out sources for each of the four attack directions in the controlled benchmark, at both $2/255$ and $8/255$.
There are 64 source--target instances per budget and 128 attack instances in total.
At each budget, 14 instances already have the target-side prediction before attack, leaving 50 initially correct instances for measuring induced success.

\Cref{tab:paired-quality} reports induced success and ranges of per-direction median quality over all 16 instances in each direction.
At $2/255$, the attack induces the target in 45 of 50 initially correct instances.
Restricting the quality calculation to these 45 induced successes gives pooled medians of 0.9784 SSIM, 0.0063 LPIPS-Alex, and 43.82\,dB PSNR.
Increasing the budget to $8/255$ induces five additional successes, with lower SSIM and higher LPIPS medians.
Thus the larger budget trades lower paired image quality for higher induced success.
No evaluated instance exceeds its $\ell_\infty$ budget under the prespecified $10^{-6}$ tolerance.
These metrics quantify fidelity to the source image; a claim about human perception or semantic preservation would require a separate evaluation.

\begin{table}[htbp]
\centering
\small
\caption{Paired image quality on the controlled benchmark. Each budget contains 64 attack instances, including 14 pre-existing target-side predictions. Induced ASR uses the remaining 50 initially correct instances. SSIM and LPIPS-Alex ranges are the minimum and maximum of the four direction-specific medians, each calculated over all 16 instances in that direction, rather than over successful attacks alone.}
\label{tab:paired-quality}
\begin{tabular}{lrrr}
\toprule
\textbf{Budget} & \shortstack{\textbf{Induced ASR}\\\textbf{(count)}} & \shortstack{\textbf{Median SSIM}\\\textbf{range}} & \shortstack{\textbf{Median LPIPS-Alex}\\\textbf{range}} \\
\midrule
$2/255$ & 90.0\% (45/50) & 0.9781--0.9849 & 0.0050--0.0081 \\
$8/255$ & 100.0\% (50/50) & 0.8011--0.8534 & 0.0770--0.1365 \\
\bottomrule
\end{tabular}
\end{table}

\Cref{fig:quality-eps2,fig:quality-eps8} show selected clean/adversarial pairs in all four directions using the same clean images at both budgets.
The $2/255$ examples include an attack failure, while the selected $8/255$ examples induce the target prediction.
Signed perturbations are displayed with the original eightfold amplification around mid-gray; these visualizations illustrate individual outcomes rather than estimate their prevalence.

\begin{figure}[p]
\centering
\includegraphics[height=0.80\textheight,keepaspectratio]{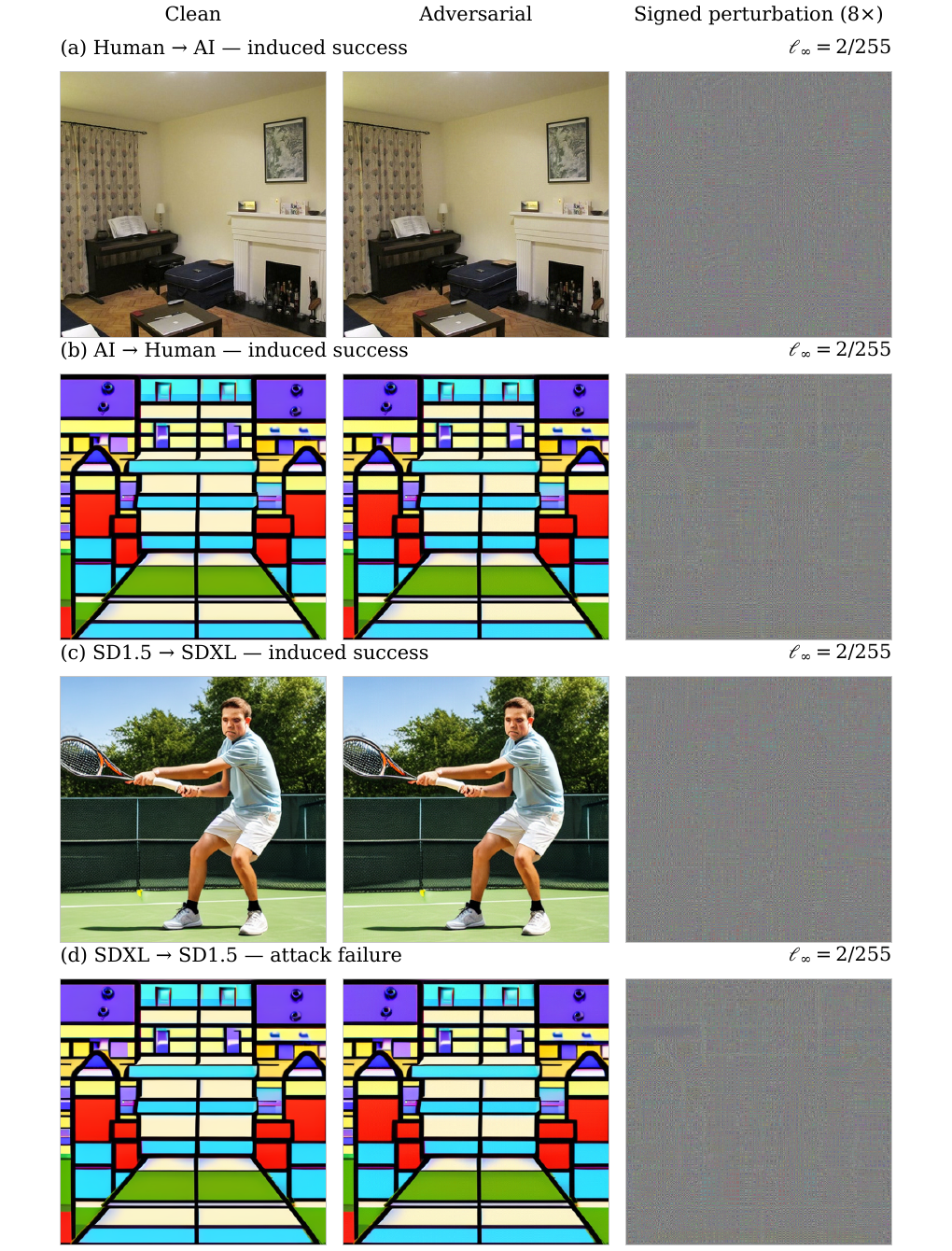}
\caption{Selected paired examples in four attack directions at an $\ell_\infty$ budget of $2/255$. Columns show the clean image, the attacked image, and the signed perturbation amplified eightfold around mid-gray. The first three rows are induced successes and the SDXL$\to$SD1.5 row is an attack failure. Figure~\ref{fig:quality-eps8} uses the same clean images at $8/255$.}
\label{fig:quality-eps2}
\end{figure}

\begin{figure}[p]
\centering
\includegraphics[height=0.80\textheight,keepaspectratio]{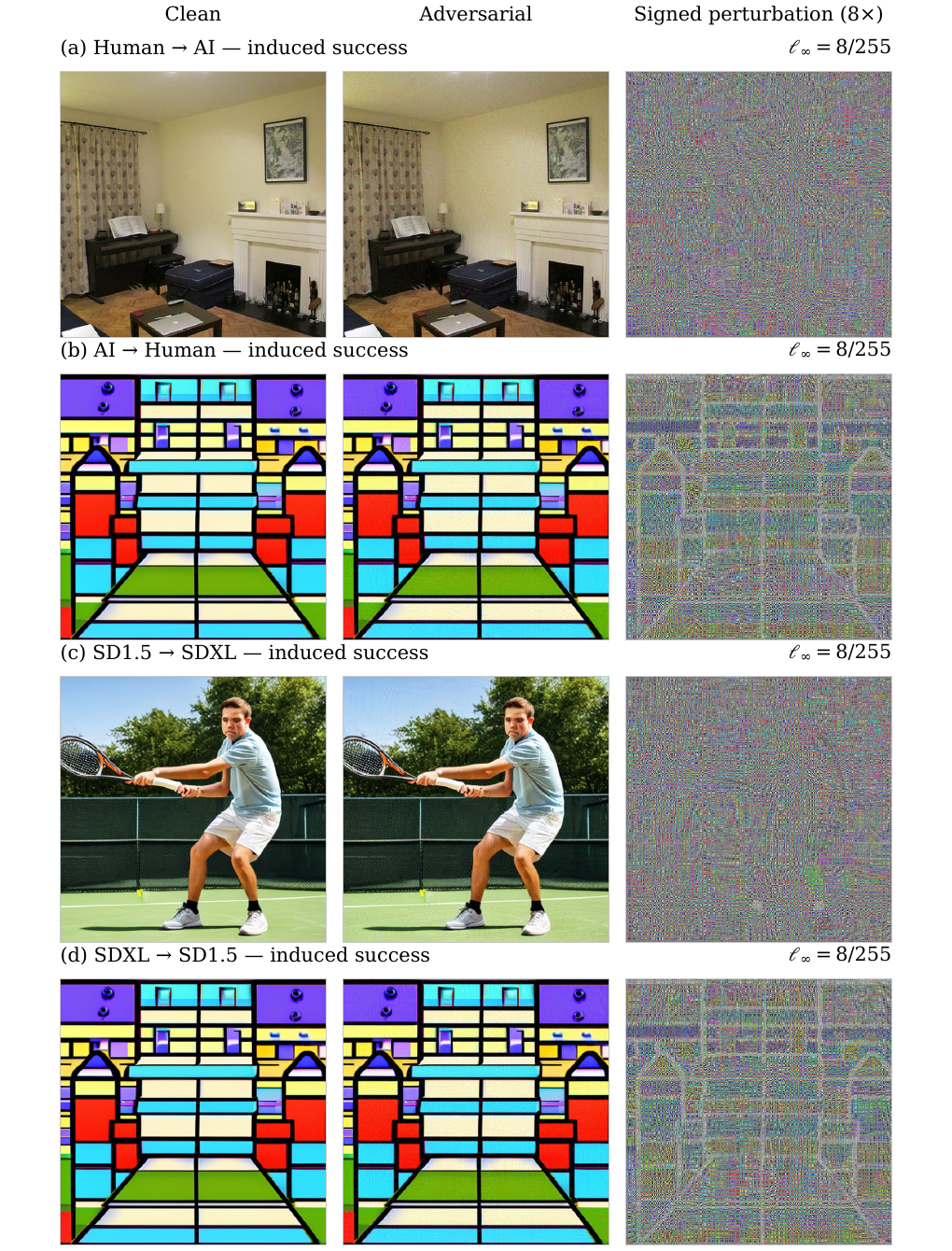}
\caption{The same four clean images as in Figure~\ref{fig:quality-eps2}, attacked at an $\ell_\infty$ budget of $8/255$. All four selected examples are induced successes. The paired views and amplified signed perturbations illustrate the larger changes at this budget; they do not establish human imperceptibility or semantic preservation.}
\label{fig:quality-eps8}
\end{figure}


\end{document}